\newtheorem{proposition}{Proposition}
\newtheorem{prop}{Proposition}
\renewenvironment{proof} 
{\noindent {\bf Proof }}
{{\hfill $\Box$}\\
}
\newcommand{\commentout}[1]{}
\begin{document}


\chead{}

\author{H\'el\`ene Barcelo$^1$ and Valerio Capraro$^2$}
\affil{$^1$Mathematical Sciences Research Institute, Berkeley, USA.\\ $^2$Middlesex University Business School, London, UK.}
\title{On the heterogeneity of people's (dis)honest behavior along two dimensions: Cost of lying and cost of acquiring information}
\maketitle

\begin{abstract}
This paper studies lying in a novel context. Previous work has focused on situations in which people are either fully aware of the economic consequences of all available actions (e.g., die-under-cup paradigm), or they are uncertain, but this uncertainty cannot be cleared in any way (e.g., sender-receiver game). On the contrary, in reality, people oftentimes know that they will have a chance to lie, they are initially uncertain about the economic consequences of the available actions, but they can invest resources (e.g., time) to find them out. Here we capture the essence of this type of situations by means of a novel decision problem. Two experiments provide evidence of four empirical regularities regarding the distribution of choices, and suggest that participants vary along two dimensions: the moral cost of lying, and the cost of investing time to find out the payoffs associated to the available actions. Taking inspiration from these observations, we introduce a model that is consistent with the main empirical results.
    
\end{abstract}

\emph{Journal of Economic Literature classification codes:} C70, C92, D03.

\emph{Keywords:} honesty, deception, lying, dishonesty, cost of lying, cost of acquiring information, behavioral economics.

\pagebreak

\section{Introduction}

The conflict between honesty and dishonesty is at the core of all economic and social interactions that involve communication with asymmetric information. In these situations, people have the opportunity to misreport their private pieces of information. Dishonesty clearly has negative impact on government and companies. For example, every year, tax evasion costs about \$100 billion to the U.S. government \citep{gravelle2009tax}, and insurance fraud costs more than \$40 billion to insurance companies\footnote{See \texttt{https://www.fbi.gov/stats-services/publications/insurance-fraud/insurance\_fraud}}. 

In the past decade, economists and psychologists have started studying (dis)honesty using incentivized economic problems \citep{erat2012white,fischbacher2013lies,gneezy2005deception,hurkens2009would,kartik2009strategic,levine2014liars,levine2015prosocial,mazar2008dishonesty,sheremeta2013liars,weisel2015collaborative,wiltermuth2011cheating}. For example, they have explored the effect on honesty of many exogenous and endogenous variables, such as: demographic characteristics \citep{abeler2018preferences,biziou2015does,cappelen2013we,capraro2018gender,childs2012gender,dreber2008gender,friesen2012individual,erat2012white}; social and moral preferences \citep{biziou2015does,levine2014liars,levine2015prosocial,shalvi2014oxytocin}; incentives \citep{dreber2008gender,erat2012white,ezquerra2018gender,fischbacher2013lies,gneezy2005deception,gneezy2018lying,kajackaite2017incentives,mazar2008dishonesty}; and cognitive mode \citep{andersen2018allowing,cappelen2013we,capraro2017does,capraro2019time,gino2011unable,gunia2012contemplation,lohse2018deception,shalvi2012honesty}.

While differing in important details, all these experiments share a common property: they are \emph{static}, in the sense that they focus either on situations in which people are fully aware of the consequences of all available options, or on situations for which there is uncertainty about the consequences of the options, but this uncertainty cannot be cleared in any way (that is, there is no room for learning the consequences of the available actions). To be more precise, one set of studies (e.g., \cite{fischbacher2013lies}) implements a die-under-cup paradigm, in which participants roll a die privately (under a cup), and then are asked to report the outcome, knowing that they would be paid according to the reported number. Since participants know the outcome of the die and know the payoff function, they are fully aware of the consequences of all available actions. The other set (e.g., \cite{gneezy2005deception}), instead, implements a sender-receiver game, in which a person, named \emph{sender}, is given a private information and is asked to report it to another person, the \emph{receiver}, whose role is to guess the original piece of information given to the sender. The payoffs of both the sender and the receiver depend on whether the receiver guesses the original piece of information. Since the sender has no way to know whether or not the receiver is going to believe his message, the sender has no way to clear the uncertainty about the economic consequences of his available actions\footnote{There are also studies implementing a sender-receiver game in which the receiver makes no choice (e.g., \cite{biziou2015does,capraro2017does,capraro2019time}). In this case, there is no uncertainty at all and, consequently, these studies belong to the previous class of studies, in which subjects are fully aware of the consequences of the available actions. There are also studies in which subjects perform a task and then are paid according to the self-reported score in the task \citep{mazar2008dishonesty,gino2013self}. In this case, either there is no uncertainty regarding the payoffs associated to the available actions or, if there is an uncertainty, for example regarding one's own real score, this uncertainty cannot be clear. Therefore, also these studies belong to the previously mentioned classes.}.

Therefore, although important, these studies provide an incomplete picture as, in reality, people oftentimes make decisions within a \emph{dynamic} setting in which they know that they will have a chance to tell the truth or lie, but they do not initially know the exact material consequences of these actions. They have to invest resources (e.g., time) to find them out. 

Many real situations are, at their core, dynamic in this sense. For example, before starting their tax declaration, people know that they are going to have the chance to lie (or misrepresent some facts of their declaration) in order to pay less taxes; however, they do not initially know how much money they would save for each possible lie they could tell. 

In this work, we capture the essence of this type of situations by means of a novel decision problem, in which participants, after being informed that they will have a chance to report a piece of information either truthfully or not (several different ways to misreport the piece of information will be possible), and before actually choosing how to report that piece of information, need to invest time to explore the payoffs associated to the available strategies.

This decision problem turns out to be particularly interesting also because it opens the way to a completely new set of empirical questions. Indeed, most previous experimental studies on deception implement either binary decision problems --- in which subjects get to choose between two strategies, one corresponding to telling the truth and one corresponding to lying \citep{biziou2015does,capraro2017does,gunia2012contemplation} --- or decision problems in which there are many different ways to lie, but finding which lie maximizes the payoff is trivial \citep{cappelen2013we,erat2012white,gneezy2005deception,shalvi2012honesty,fischbacher2013lies,sheremeta2013liars}. On the contrary, in our case, decision makers can lie in several different ways: they can explore all potential payoffs and then report the piece of information that corresponds to the global maximum; or they can explore only some payoffs and then report a piece of information that corresponds to an early local maximum (a local maximum that can be found easily and quickly --- to be properly defined later); or they can explore the payoff corresponding to reporting the truth and then decide to lie only if this payoff is low (in case they decide to lie, they can do it in at least two ways: they can choose the global maximum or they ``can stretch the truth'', by looking around the truth in search of a profitable deviation); subjects can even be indifferent and pick one choice at random.

To shed light on how people make decisions in this context, we conduct
two experiments, Study 1 and Study 2 (see Section \ref{se:study1}, Section \ref{se:limitation}, and Section \ref{se:study2}). These experiments provide evidence of four major results: (i) Very few participants stretch the truth; (ii) There is a significant proportion of participants that report the truth without first finding out its corresponding payoff; (iii) Among participants who decide to find out the payoff corresponding to telling the truth, this payoff is positively correlated with honesty; (iv) If there is a very high payoff (but smaller than the global maximum) that participants can find easily and quickly, then a significant proportion of participants report the piece of information corresponding to this very high payoff. 

These results suggest that participants vary in (at least) two dimensions: the moral cost of lying, and the cost of finding out payoffs. In Section \ref{se:model} we follow this idea and we introduce a model. According to this model, participants sequentially choose whether to report a piece of information at time $t$ or to pay a cost to reveal one payoff at time $t+1$. The utility is assumed to depend on the payoff corresponding to the reported information, the moral cost of lying (that is paid only if a participant decides to lie), and the cumulative cost of finding out payoffs. We show that this model is consistent with the four empirical regularities listed above.

Section \ref{se:discussion} concludes.

\section{Overview of the experimental design}\label{se:overview}

Our goal is to build a decision problem that allows us to study how people make decisions in situations in which they know that they will have a chance to lie, but they do not initially know the economic consequences of the available actions. Perhaps the simplest way to do this is as follows. Participants are given a list of payoffs that are initially covered. Then they are given, as private information, a ``position''. Finally, their are told that their job is to report the given position, and that they will be paid the amount of money corresponding to the position they report. In this way, participants are incentivized to look for the position corresponding to the maximum payoff and report that position. (Note that it is important to ask participants to report the position, and not the payoff in that position, otherwise it would not be possible to make ``blind'' decisions, that is, decisions without knowing the corresponding payoff). At this point, participants can either report a position without finding out any payoff, or can invest time to sequentially uncover some or all payoffs. This design certainly has a number of positive features. For example, it would allow us to see the number of payoffs uncovered by the participants before reporting a position. This, in particular, would allow us to know with certainty whether a participant reports a position with or without knowing the corresponding payoff. Yet, this design also has an important drawback. Previous work suggests that actively finding out the economic consequences of the available actions is perceived, by observers, as a signal of selfishness \citep{capraro2016know,jordan2016uncalculating}. This implies that pro-social participants might be particularly reluctant to actively uncover the payoffs. To avoid this potential problem, we opted for keeping the same decision problem structure, but, instead of giving participants the list of payoffs covered, we give it to them already uncovered. Note that, in this case, participants still need to invest time to find the payoff corresponding to each position they might report, because they have to count up to the position corresponding to the desired payoff. Therefore, the key point of the decision problem remains intact. Moreover, we avoid the problem that pro-social participants might be particularly reluctant to explore the payoffs. But clearly this comes with a price: we will not be able to directly see whether participants make a decision with or without knowing the payoff corresponding to telling the truth. In Study 1, we address this point by putting the given position late in the list and by analyzing response times. The idea (supported by the results, as it will be discussed later) is that fast participants are less likely than slow participants to have found out the payoff corresponding to telling the truth. 

\section{Study 1}\label{se:study1}

\subsection{Procedure}

We recruited participants on the online platform Amazon Mechanical Turk (AMT). AMT experiments are easy and cheap because participants attend from their homes by simply completing an online incentivized survey that takes no more than a few minutes. This allows researchers to significantly decrease the stakes of the experiment, without compromising the results. Several studies have indeed shown that data gathered using AMT are of no less quality than data gathered using the standard physical lab \citep{arechar2018conducting,berinsky2012evaluating,goodman2013data,horton2011online,mason2012conducting,paolacci2010running,paolacci2014inside}. Moreover, as an upside with respect to standard laboratory experiments, AMT experiments use samples that are more heterogeneous than the standard laboratory experiments, that are typically conducted using a pool of students \citep{berinsky2012evaluating,paolacci2014inside}. 

After entering their Turk ID, participants were told that they will generate two pieces of information:
\begin{itemize}
    \item A position, which will be a number, for example 14. 
    \item 	A list of potential payoffs between 1 and 90 cents, such as: 
    \end{itemize}
    $$ 23\quad 73\quad  34\quad  22\quad  2\quad  11\quad  54\quad  21\quad  44\quad  3\quad  22\quad  6\quad  89\quad  45\quad  67\quad  23\quad  65\quad  46\quad  77\quad  1\quad  86\quad  5\quad  4$$

All the payoffs were shown on the same line, divided by two spaces.

Participants were told that their payoff will be equal to the number in the given position. Thus, in this case, it would be 45 cents, because 45 is in position 14. (We chose this particular example, because the payoff is essentially equal to the expected payoff from a random sequence, which is 45.5 cents --- we wanted to avoid priming participants towards being lucky or unlucky).

Then we told participants that their job will be to report the position they generated. We made it clear that the survey was anonymous and that, at the moment in which participants report their position, we do not know the position they generated.\footnote{This is not deception: participants were randomly assigned to different positions by a computer program and we could find out the position corresponding to a given participant only at the end of the survey.} After these general instructions, participants answered six comprehension questions with mock positions and mock lists. The positions and the lengths of the mock lists were all different, in order to avoid that the participants get used to finding the payoff associated with telling the truth without actually counting, and to avoid that they count backwards. Only participants who answered correctly all questions were allowed to proceed to the real experiment. Comprehension questions included questions about payoff maximization. So, participants who passed this attention test were aware of the fact that they could increase their payoff by misreporting the position they generated.

After the comprehension questions, participants were told that the real experiment was about to start, they were reminded that they will generate a position and a list of potential payoffs, and that their payoff for the survey will be equal to the number in the position they report. Then we asked them to press the next button to start playing.

In the next screen, participants were randomly divided in two treatments. In the Lucky condition, they were communicated that the position they generated was 22; in the Unlucky condition, they were communicated that the position they generated was 19.\footnote{The reason why the first condition is called Lucky while the second one is called Unlucky will be clear in the next paragraph. Clearly, the words Lucky and Unlucky were not used in the instructions: participants were only communicated their position.} Participants were asked to take note of this position on a piece of paper, and then to press the next button to generate the list of potential payoffs.\footnote{We chose to ask participants to take note of their position on a piece of paper in order to avoid that participants forget their position and end up lying not because they want to, but because they do not remember their position. We were particularly concerned with this point because participants in the next screen will face the list of payoffs and therefore might confuse the position (which is a number) with the payoffs (other numbers).} 

In the following screen, all participants were shown the same list of potential payoffs, which was: 

$$25\,\,\,\, 3\,\,\,\, 63\,\,\,\, 54\,\,\,\, 28\,\,\,\, 70\,\,\,\, 37\,\,\,\, 36\,\,\,\, 26\,\,\,\, 31\,\,\,\, 43\,\,\,\, 15\,\,\,\, 30\,\,\,\, 60\,\,\,\, 33\,\,\,\, 37\,\,\,\, 15\,\,\,\, 63\,\,\,\, 16\,\,\,\, 50\,\,\,\, 4\,\,\,\, 71\,\,\,\, 79\,\,\,\, 2\,\,\,\, 85\,\,\,\, 48$$

Then, participants were asked to report the position they generated.\footnote{We also conducted a Time Pressure condition, in which participants were asked to report their position within 15 seconds (median response time in the baseline). The reason why we also tested this condition is because there is a standing debate about whether honesty is intuitive or requires deliberation \citep{capraro2017does,capraro2019time,gunia2012contemplation,lohse2018deception,shalvi2012honesty}, and we wanted to see whether cognitive process would also matter in our case. It would not. We found that time pressure has no effect on honesty, although it has quite an obvious effect on the distribution of reported positions: it switches some would-be global maximizers into local maximizers, and it generates a small proportion of confused subjects. We report the analysis in the Appendix.} We took a measure of response time using a timer. The timer was not visible to participants.  

Before moving on, we make two observations about the list of potential payoffs. First, if participants in the Unlucky condition report the true position, then they get only 16 cents (because 16 is in position 19); if participants in the Lucky condition report the true position, then they get 71 cents (because 71 is in position 22), which is close to the maximum available payoff, which is 85 cents. This is why the former condition is named Unlucky, whereas the latter is named Lucky. Second, both Position 19 and Position 22 (the true positions) are adjacent to positions with a greater payoff. Thus, both lucky and unlucky participants can ``stretch the truth'' and get a greater payoff very easily (unlucky participants can increase their payoff by 47 cents or 34 cents, if they report Position 18 or Position 20, instead of their true position; similarly, lucky participants can increase their payoff very easily (by 8 cents) by reporting Position 23, instead of their true position). In other words: conditional on finding out the payoff corresponding to telling the truth, reporting the true position is essentially as easy as lying. Also this property is crucial: it allows us to avoid having subjects who, after finding out the payoff corresponding to telling the truth, prefer telling the truth over lying only because telling the truth is easier than lying.

After reporting the position, subjects were asked standard demographic questions (sex, age, education) and then they were given the completion code needed to submit the survey to AMT and claim for the payment.

We refer to the Appendix for full experimental instructions.

\subsection{Results}

\emph{Participants}

We recruited 400 participants located in the US. As it is usual in AMT experiments, in case of multiple observations (defined as those with either the same TurkID or the same IP address) we kept only the first observation (as determined by the starting date) and discarded the rest. Moreover, we excluded from the analysis participants who submitted the survey without completing it and participants who failed one or more comprehension questions. After this procedure, we were left with 347 valid observations (mean age = 36.8, females = 45.8\%). The proportion of participants that has been eliminated is in line with previous research using similar games \citep{horton2011online}.
\bigskip

\emph{Distribution of reported positions in Study 1.}

Figure 1 plots the histogram of the reported positions. Note that Position 19 and Position 22 correspond to the true positions in the Unlucky and Lucky conditions, respectively. Therefore, the majority of people acted honestly (overall rate of honesty = 84.1\%). This is in line with a recent meta-analysis, arguing that people lie surprisingly little in economic experiments \citep{abeler2018preferences}. Regarding the liars, observe that Position 25 is the one corresponding to the global maximum, while Position 3, Position 6, Position 18, and Position 23 correspond to the local maxima. It follows that virtually all liars maximized their payoff (either locally or globally). Moreover, among the 118 liars, only 10 chose to ``stretch the truth'', by choosing a position adjacent to the true position. In other words, 108 out of 118 liars (91.5\%), either reported the position corresponding to the global maximum, or reported a position corresponding to an early local maximum, where early local maximum stands for positions 3 and 6.
\bigskip

\textbf{Result 1.} Very few participants stretch the truth, that is, the vast majority of participants either report the true position or a position that is \emph{not} adjacent to the true position; in this latter case, they almost always report a position corresponding to a maximum (either global or local).
\bigskip

\begin{figure}
  \caption{Histogram of the positions reported in Study 1.}
  \centering
    \includegraphics[width=0.7\textwidth]{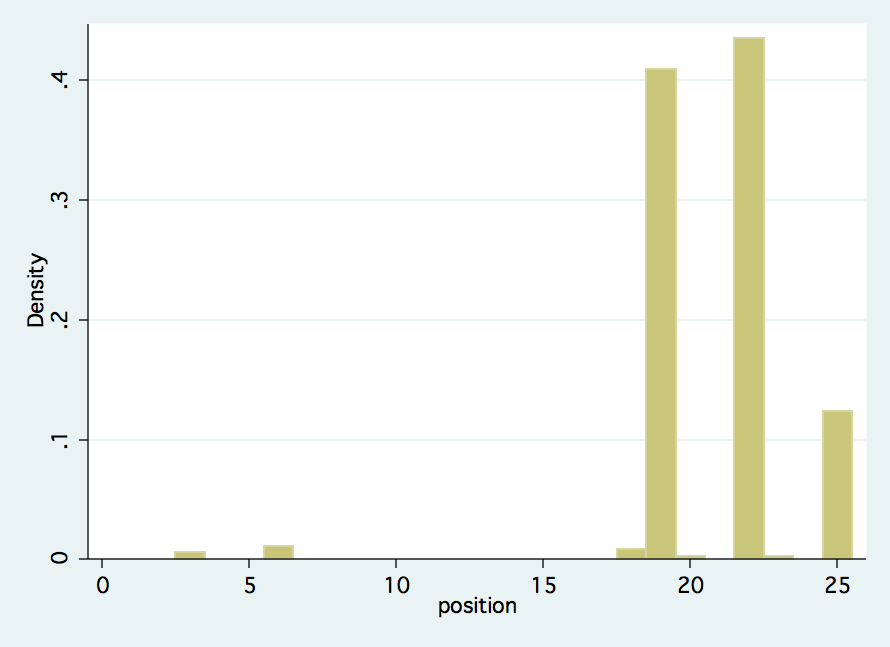}
\end{figure}

\bigskip

\emph{Heterogeneity in people's decision-making}

In this section, we explore whether participants find out the payoff corresponding to telling the truth before reporting a position, or whether there are some participants who report a position without first finding out the payoff corresponding to telling the truth. Apart from shedding light on how people make decisions in our experiment, this would also help us understand whether investing time to find out the payoffs has a cost for the participants.

To this end, we use response times. The idea is that participants who find out the payoff corresponding to telling the truth will take less time to report a position compared to participants who stop finding out the payoffs before reaching the payoff corresponding to telling the truth. Following this intuition, we divide subjects in two subsamples, by doing a median split on the response times: we define Fast (Slow) participants as those who take shorter (longer) than the median response time to report a position. Clearly, we do not expect this to be an exact measure: there will be Slow subjects who did not find out the payoff corresponding to telling the truth; conversely, there will be Fast subjects who found out the payoff corresponding to telling the truth. However, as we now show, this turns out to be a useful proxy. To see this, we start with the observation that, in line with previous literature, we expect that Unlucky subjects lie more than Lucky subjects, because they have a larger incentive to lie \citep{gneezy2005deception,gneezy2018lying}. We expect to see this effect also in our experiment, but only among subjects who found out the payoff corresponding to telling the truth (because the other ones do not know whether they have been lucky or not). Therefore, if, as we argue, most Fast participants did not find out the payoff corresponding to telling the truth and most Slow participants found out the payoff corresponding to telling the truth, then we would see that being Unlucky vs Lucky has an effect among Slow participants but not among Fast participants.

This prediction is indeed supported by the data. Figure 2 reports the rate of honesty split by whether they were Lucky or Unlucky and by response time (Fast participants vs Slow participants). 

\begin{figure}
  \caption{Rate of honesty among participants in Study 1, split by whether they were Lucky or Unlucky and by response time (faster half vs slower half). Error bars represent standard errors of the means.}
  \centering
    \includegraphics[width=0.7\textwidth]{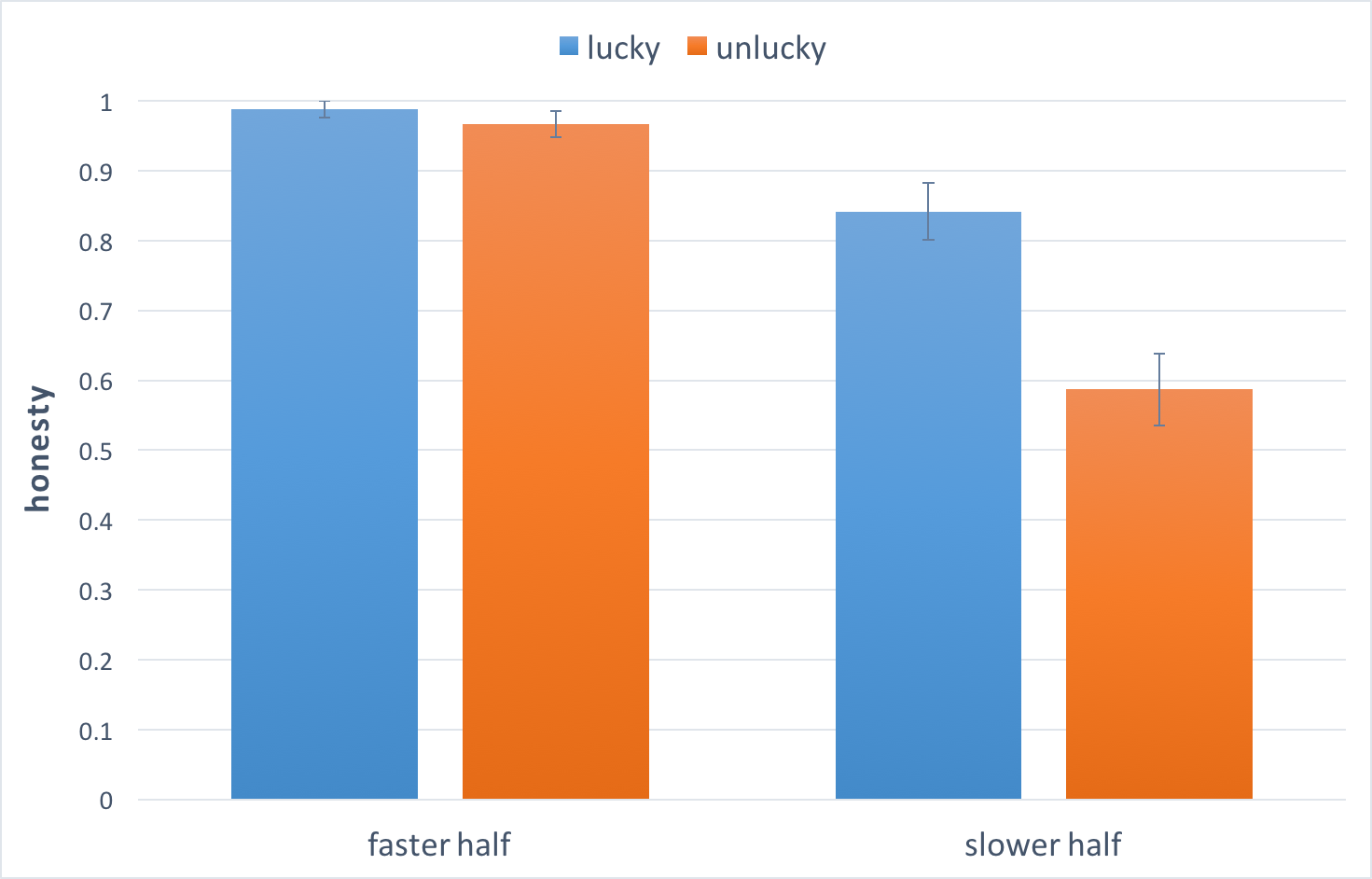}
\end{figure}

Fast participants in the Lucky condition behaved essentially the same as the Fast participants in the Unlucky condition, and they were extremely honest (rate of honesty: 98.8\% vs 96.7\%; logit regression without control: coeff $= 1.02$, z $= 0.87$, p $= 0.383$; with control: coeff $= 1.14$, z $= 0.96$, p $= 0.335$). On the contrary, among Slow participants, Unlucky participants were way more dishonest than Slow and Lucky ones (rate of honesty: 58.7\% vs 84.1\%; logit regression without control: coeff $= 1.32$, z $= 3.57$, p $< .001$; with control: coeff $= 1.30$, z $= 3.33$, p $= 0.001$). 

This suggests that, indeed, Fast participants were less likely than Slow participants to have found out the payoff corresponding to telling the truth and that being lucky has a positive effect on honesty, but only among participants who found out all the payoff corresponding to telling the truth. Moreover, the fact that virtually all Fast participants were honest suggests that there is a proportion of participants who tell the truth without first finding out its corresponding payoff.
\bigskip

\textbf{Result 2.} There is a proportion of participants who report the truth without first finding out its corresponding payoff.
\bigskip

\textbf{Result 3.} Among participants who decide to find out the payoff corresponding to telling the truth, this payoff has a positive effect on honesty.

\section{Limitation of Study 1}\label{se:limitation}

Study 1 clearly shows that participants vary in their moral cost of lying: there is a proportion of participants who report the truth without first finding out the corresponding payoff; and there is a proportion of participants who first find out the payoff corresponding to telling the truth and then lie only if this payoff is low.

However, in principle, the results of Study 1 can be explained by assuming that participants vary \emph{only} in the cost of lying, and that (virtually) all participants have zero cost of finding out payoffs. To see this, assume that all participants have zero cost of finding out the payoffs and let $\lambda(p)\geq0$ be the individual parameter representing the moral cost of lying of participant $p$. If $\lambda(p)$ is above a certain threshold $\lambda^*$, then $p$ knows that s/he will tell the truth regardless of the corresponding payoff (remember that participants know that their payoff is somewhere between 0 and 90 cents). Since the cost of finding out payoffs is assumed to be 0, such a participant $p$ will be indifferent between finding out the payoffs or not and therefore, statistically, $p$ will stop finding out the payoffs in the middle of the list. If, instead, $\lambda(p) < \lambda^*$, it is optimal for $p$ to find out all the payoffs. Therefore, participants $p$ with $\lambda(p) < \lambda^*$ will take longer to make a decision compared to participants with $\lambda(p) \geq \lambda^*$, because they have to find out all the payoffs. Moreover, participants with $\lambda(p) < \lambda^*$ will either tell the truth or lie maximally, depending on whether their cost of lying counterbalance or not the difference between the maximum available payoff and the payoff corresponding to telling the truth. These predictions are in line with the results of Study 1, which indeed show that virtually all participants either tell the truth or report the position corresponding to the global maximum (note that almost no one reported a position corresponding to a local maximum), and that Fast participants are way more honest than Slow participants. Therefore, the results of Study 1 can be explained without assuming that investing time to find out the payoffs has a cost for the participants.

The goal of Study 2 is to show that a significant proportion of participants do have a non-zero cost of finding out payoffs.

To do this, we observe that the main limitation of Study 1 is that there are no very high payoffs at the beginning of the list. In Study 2, we consider a new list characterized by the fact that, at the beginning, there is a very high payoff, which is however still smaller than the global maximum. The rationale for considering this variant is the following. If also with this new list most participants will either report the truth or report a position corresponding to the global maximum, then this would be a indication that virtually all people have a zero-cost of finding out payoffs. If, instead, in Study 2 a significant proportion of participants will report this high local maximum, this would be an indication that a significant proportion of subjects have a non-zero cost of finding out payoffs.

Furthermore, in Study 2 we would like to bring additional evidence that there is a proportion of participants who report the true position without first finding out its corresponding payoff. In Study 1 this result was obtained indirectly by analyzing response times. In Study 2 we would like to strengthen this result through an additional test. At the end of the experiment, we will ask participants to self-report whether they have read the list of the payoffs before making a decision, or not. Then we will compare the behavior of those who respond that they have read the list of payoffs with the behavior of those who respond that they have not. We are aware that non-incentivized self-reported questions like this one have their limitations (e.g., participants are not incentivized to answer and therefore they might respond at random). We indeed considered obtaining the same information through an incentivized question, for example by incentivizing participants to report, at the end of the survey, the payoff corresponding to their given position. However, we reasoned that such a measure would generate a bias, such that liars who have looked at the payoff corresponding to telling the truth would be hesitant to answer this question truthfully (even if it is incentivized), because it would be an admission that they have lied. Therefore, it is possible that some liars who have read the list would be classified as people who have not read the list. To avoid this bias we decided to adopt the self-reported question. In the results section we will show that, although it is a non-incentivized measure, it turns out to be reliable.

Finally, the third goal of Study 2 is to replicate the three main empirical regularities found in Study 1. Replication is crucial to increase confidence that findings are not false positives, and it is especially important now in light of the Replication Crisis in Psychology \citep{open2015estimating}.

\section{Study 2}\label{se:study2}

\subsection{Experimental design and procedure}

The design was identical to Study 1. There were only two differences. The first difference regarded the list of potential payoffs. In Study 2 we used the following list:

$$
65\,\,\,\, 87\,\,\,\, 27\,\,\,\, 36\,\,\,\, 62\,\,\,\, 20\,\,\,\, 53\,\,\,\, 54\,\,\,\, 64\,\,\,\, 59\,\,\,\, 47\,\,\,\, 75\,\,\,\, 60\,\,\,\, 30\,\,\,\, 57\,\,\,\, 53\,\,\,\, 75\,\,\,\, 27\,\,\,\, 16\,\,\,\, 40\,\,\,\, 86\,\,\,\, 71\,\,\,\, 11\,\,\,\, 88\,\,\,\, 5\,\,\,\, 42   
$$

We built this list as follows: apart from the payoffs corresponding to telling the truth in the Lucky and Unlucky situations, that we kept constant with respect to Study 1, this list was built from the list in Study 1, by taking the \emph{complementary} payoffs, that is, the payoffs in the list of Study 2 were equal to ninety minus the corresponding payoff in the list of Study 1. Note that this list fits our purpose: it contains a very high payoff at the beginning (87 in position 2), but this payoff is strictly smaller than the global maximum, which is 88, in position 24. Moreover, the payoffs corresponding to telling the truth are adjacent to larger payoffs. Indeed, in the Unlucky condition, reporting Position 20 instead of the true Position 19 guarantees a payoff of 40 cents instead of a payoff of 16 cents; similarly, in the Lucky condition, reporting Position 21 instead of Position 22 guarantees a payoff of 86 cents instead of a payoff of 71 cents. Therefore, similarly to Study 1, participants of Study 2 can ``stretch the truth'' very easily. Finally, the global maximum is not adjacent to either of the positions corresponding to telling the truth. 

The second difference with respect to Study 2 was that, after making their choice, participants were asked: ```Did you read the list before reporting the position?'' They could answer either yes or no.  

\subsection{Results}

\emph{Participants}

We recruited 200 participants on AMT, none of whom had participated in Study 1. After eliminating multiple IP addresses, multiple TurkIDs, those who left the survey incomplete, and those who failed the comprehension questions, we were left with $N=188$ participants (mean age = 34.3, females = 36.7\%).
\bigskip

\emph{Distribution of reported positions}

Figure 3 plots the histogram of the reported positions. As in Study 1, most subjects either report the truth or report a maximum of the payoff function, either local or global. Only a minority stretch the truth. More precisely, 30 subjects report position 2 (first local maximum), 1 subject reports position 5 (second local maximum), 1 subject reports position 9 (third local maximum), 131 subjects report the truth, 6 subjects report position 21 (but only 2 of them participated in the Lucky condition, and thus stretched the truth, according to our definition\footnote{Alternatively, one could define stretching the truth more flexibly, by including people who report a position one or two positions away from the true position. Clearly, this does not change the qualitative results.}), 1 subject reports position 23 (which is actually a minimum, so this person probably made a mistake when counting), 18 subject report position 24 (the global maximum). Therefore, Result 1 in Study 1 is replicated.

\begin{figure}[h]
  \caption{Histogram of the positions reported in Study 2.}
  \centering
    \includegraphics[width=0.7\textwidth]{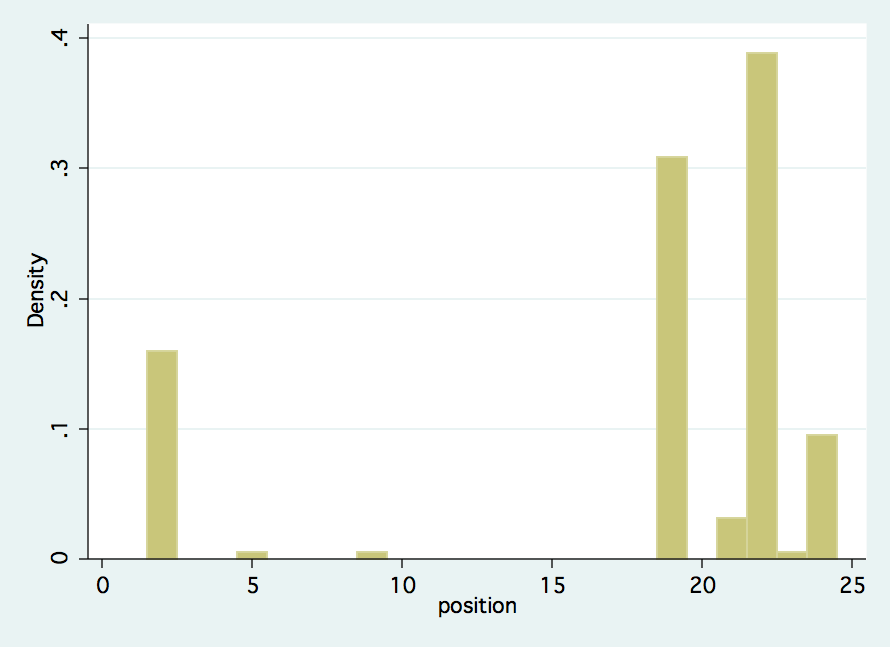}
\end{figure}

However, there is an important difference with respect to Study 1: while in Study 1 the third most chosen position was the global maximum, here the third most chosen position is position 2, which is a local but not global maximum. As mentioned in Section \ref{se:limitation}, this provides evidence that some people do have a non-zero cost of finding out payoffs. 

Another important difference with respect to Study 1 regards the rate of honesty. In Study 1 the rate of honesty was 84.1\%. In Study 2 the rate of honesty was 69.7\%. Logit regression confirms that the difference is significant (p $< .001$). We are aware that this logit regression is not formally correct, because there is no random assignment between Study 1 and Study 2. We nevertheless believe that it is an indication that lying is more pervasive in Study 2 than in Study 1. In retrospection, this is not surprising: it is reasonable to think that participants with an ``average'' (but not zero) cost of finding out payoffs and with an ``average'' (but not zero) cost of lying, would find out \emph{some} payoffs and then lie only if, among the payoffs that they have found out, there is a very high payoff. As we will see, this observation turns out to be a natural consequence of the model we present in Section \ref{se:model}. 

\bigskip

\emph{Heterogeneity in participants' decision-making}

As in Study 1, in this section we aim to understand whether all participants find out the payoff corresponding to telling the truth before reporting a position, or whether there is a proportion of participants that report a position without first finding out the payoff corresponding to telling the truth. As in Study 1, we start by splitting participants in the faster half and the slower half. Next we will analyze the responses to the self-reported question: ``Did you read the list before reporting a position?''. 

Figure 4 reports the rate of honesty among lucky and unlucky participants, split by whether they responded fast or slow. The figure puts in evidence that, similarly to Study 1, the positive effect of luck is present only among slow participants (p $= 0.020$) but not among fast participants (p $=0.562$). This suggests that, indeed, slow participants are more likely than fast participants to find out the payoff corresponding to telling the truth (otherwise, the positive effect of luck on honesty would have been present also among fast participants). There is, however, a major difference with respect to Study 1. In Study 1 fast participants were almost exclusively honest (rate of honesty = 97.7\%); in Study 2, the rate of honesty among fast participants appears to be quite smaller (80.8\%). To understand the reason of this difference, we look at the distribution of reported positions among fast participants in Study 2. We find that virtually all fast liars (17 out of 18) report the local maximum (Position 2); only one fast liar reports the global maximum. Therefore, while in Study 1 the Fast participants were virtually all honest; in Study 2, virtually all Fast participants are either honest or report the early local maximum. This suggests that, in Study 1, some fast participants actually read part of the list of the payoffs, but then reported the true position only because they did not find a high payoff at the beginning of the list. In sum, this suggests being in the faster half is not a proxy for not reading the list at all, but it is a proxy for reading \emph{part} of the list. 

\begin{figure}[h]
  \caption{Rate of honesty among participants in Study 2, split by whether they were Lucky or Unlucky and by response time (faster half vs slower half). Error bars represent standard errors of the means}
  \centering
    \includegraphics[width=0.7\textwidth]{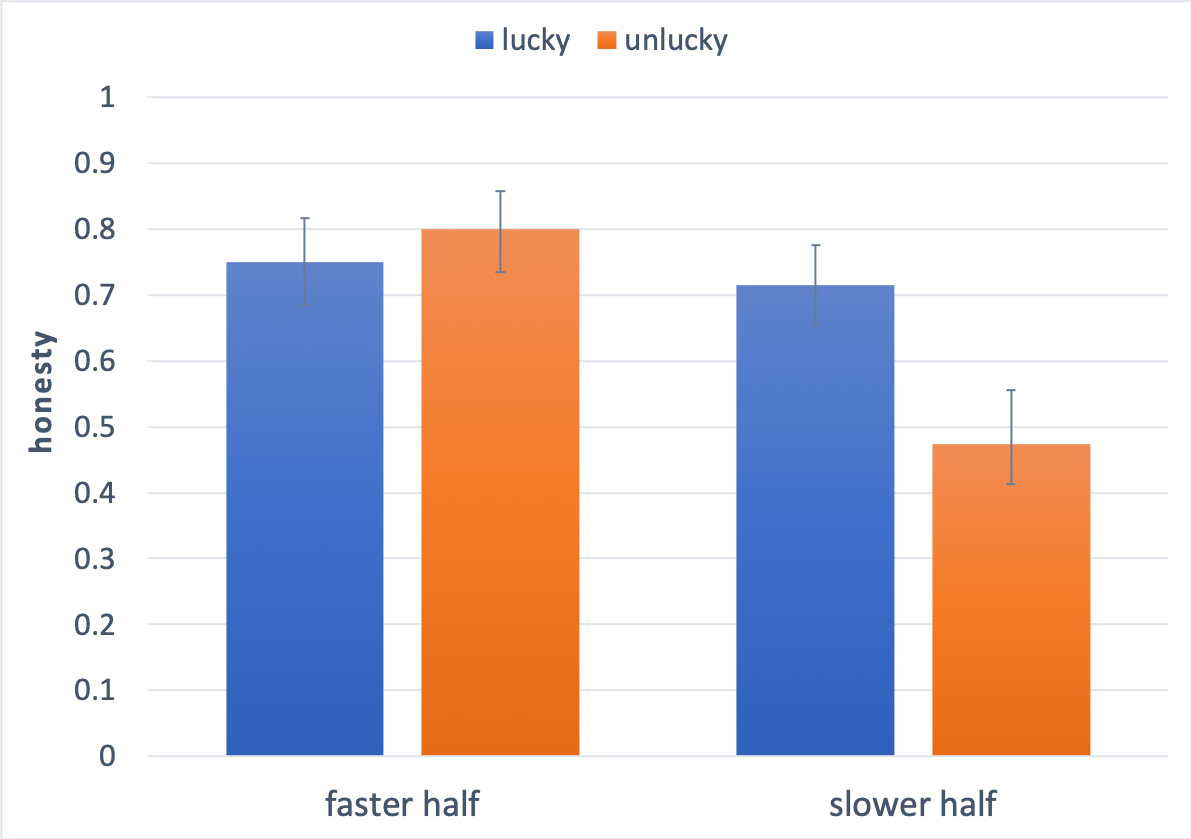}
\end{figure}

The analysis of answers to the self-reported question ``Did you read the list before reporting a position?'' confirms this view. We find that only 21.2\% of the participants declared that they did {\it not} read the list. Among these participants, the rate of honesty was 87.5\%, significantly higher than the rate of honesty among participants who declared that they had read the list, which was 64.8\% (p $=0.009$). Moreover, all the liars who declared that they had not read the list reported the local maximum (Position 2), except one who reported the next local maximum (Position 5). Furthermore, the average response time among participants who declared that they had not read the list (13.96s) was significantly smaller (p $<.001$) than the average response time among participants who declared that they had read the list (24.85s). Taken together, these results suggest that the answer to the question ``Did you read the list before reporting a position?'' was a reliable proxy for participants who read \emph{part} of the list.

We can now use these observations to show that Study 2 replicates also Result 2 and Result 3 of Study 1.

Result 2 stated that there is a significant proportion of participants who tell the truth without first finding out its corresponding payoff. To provide support for this result also in Study 2, we look at participants who declared that they had not read the list. The average response time among participants who reported the true position and then reported that they had not read the list was significantly smaller than the average response time among participants who reported the true position and then reported that they had read the list (12.46s vs 22.65s, p $<.001$). This provides a first piece of evidence that participants who reported the true position and declared that they had not read the list, reported the true position without first finding out its payoff corresponding. More critically, we find that participants who reported the true position and then reported that they had not read the list had a smaller response time than participants who did not report the true position and then reported that they had not read the list (12.46s vs 24.30s, p $=0.025$). Remembering that these liars almost exclusively reported Position = 2, the fact that participants who reported the true position take less time than these participants to report a position, suggests that most participants who reported the true position and then answered that they did not read the list, did not even start reading the list. This also implies that we can estimate that the proportion of subjects that report the true position without reading the list is, in our sample, 18.55\% (i.e., the proportion of honest people who declared that they had not read the list). 

Result 3 stated that, among participants who find out the payoff corresponding to telling the truth, being lucky has a positive effect on honesty. To provide support for this result also in Study 2, we look at the effect of luck among participants who declared that they had read the list of payoffs. As expected, among these participants, the rate of honesty among Lucky participants was significantly higher than the rate of honesty among Unlucky participants (61.4\% vs 38.5\%, p $=0.033$). Note that the same positive effect of Luck on honesty is not present among participants who declared that they had not read the list, in which case the results even trend in the opposite direction (rates of honesty: 77.8\% vs 95.4\%, p $=0.231$).

\section{Model}\label{se:model}

The goal of this section is to develop a model that can explain the empirical results found in the two experiments. We start by summarising these results in the next subsection. We then introduce an informal description of the model, which will help us guide through the formal development made in the following subsection. Next, we report the analysis of the model, that we divide in five propositions. Finally, we show that the model is consistent with the empirical results.

\subsection{Summary of the empirical results}\label{se:results}

The first three results correspond to the results found in Study 1 and replicated in Study 2.

\textbf{Result 1.} Very few participants stretch the truth, that is, the vast majority of participants either report the true position or a position that is \emph{not} adjacent to the true position; in this latter case, they almost always report a position corresponding to a maximum (either global or local).

\textbf{Result 2.} There is a significant proportion of participants who report the truth without first finding out its corresponding payoff.

\textbf{Result 3.} Among the participants who find out the payoff corresponding to telling the truth, luck has a positive effect on honesty (i.e., lucky participants lie less than unlucky participants).

The last result is the core finding of Study 2. 

\textbf{Result 4.} If there is a very high potential payoff (but smaller than the global maximum) at the beginning of the list, a substantial proportion of people report the position corresponding to this high payoff.

\subsection{Informal description of the model: A subdivision in nine types}\label{se:informal}

Before developing its formal details, we think it is useful to provide an informal description of the model. 

In particular, here we observe that the former four results can be broadly explained by assuming that there are nine types of participants, who differ on how they score along two dimensions: the intrinsic cost of lying, and the cost of finding out new payoffs. More precisely, assume that each participant can be classified by two individual variables $\lambda$ and $\tau$, where $\lambda$ is the intrinsic cost of lying and $\tau$ is the cost of finding out new payoffs. The cost of lying is paid only if a participant lies; the cost of finding out new payoffs is paid only if a participant finds out new payoffs. Apart from (possibly) paying these costs, we assume that participants aim at maximizing their material payoff. Let us assume that $\lambda$ and $\tau$ can assume only three values, that we denote $l$ (for low), $m$ (for medium), and $h$ (for high). For the sake of the informal argument, we assume that $l$ is essentially equal to 0, and we assume that $h$ is ``high enough'', meaning that a person with $h$ cost of lying will always tell the truth and a person with $h$ cost of finding out payoffs will report a position without finding out any of the payoffs. Therefore, we obtain a subdivision in 9 types as follows:

\begin{itemize}
    \item $(l,l)$ participants do not pay any cost for lying and pay no cost for finding out payoffs. In this case, the participant knows that s/he might lie, therefore, the best strategy is to find out all the payoffs and then report the position corresponding to the global maximum.
    \item $(l,m)$ participants do not pay any cost for lying and they pay a medium cost for finding out payoffs. Therefore, the best strategy for these participants is to read part of the list and then report the position corresponding to the maximum of the sub-list that they have read.
    \item $(l,h)$ participants do not pay any cost for lying, but they pay a high cost for finding out payoffs. Therefore, these participants do not find out any payoff and report a random position. 
    \item $(m,l)$ participants have medium cost of lying and pay no cost for finding out payoffs. In this case, the participant knows that s/he might lie, therefore, the best strategy is to find out all the payoffs and then lie only if the difference between the global maximal payoff and the payoff corresponding to telling the truth is greater than the cost of lying. 
    \item $(m,m)$ participants pay a medium cost for lying and a medium cost for finding out payoffs. Therefore, these participants will read part of the list and then, if they find a very high payoff in the part of the list that they have read, they lie, otherwise they tell the truth (observe that participants may or may not know the payoff corresponding to telling the truth; if they know it, clearly, the larger this payoff, the less likely will they be to lie). The best strategy will depend on M and will be computed exactly in section \ref{suse:analysis}.
    \item $(m,h)$ participants pay a medium cost for lying and a high cost for finding out payoffs. Therefore, the best strategy for these participants is to not find out any payoff and tell the truth.
    \item $(h,l)$ participants have high cost of lying and pay no cost for finding out payoffs. These participants know that they will tell the truth regardless of its payoff. Moreover, since the cost of finding out payoffs is zero, these participants find out the payoffs up to a random position, and then report the truth regardless of the payoffs that they have found out.
    \item $(h,m)$ participants pay a high cost for lying and a medium cost for finding out payoffs. These participants would, in principle, read part of the list and then report the true position, regardless of the payoffs that they have read. Since the cost of finding out new payoffs is not zero and since these participants already know that they will tell truth, then their best strategy is to report the truth without finding out any payoff.
    \item $(h,h)$ participants pay a high cost for lying and a high cost for finding out payoffs. The best strategy for these participants is to not find out any payoff and report the truth.
\end{itemize}

We note that this classification easily explains our results. Result 1 holds essentially by definition, because the cost of lying does not depend on the magnitude of the lie. Therefore, whenever a participant lies, he does so to maximize the payoff (globally, if he found out all the payoffs; locally, if he found out only some payoffs). Result 2 (a substantial proportion of people report the truth without first finding out the corresponding payoff) comes from participants in the $(m,h)$, $(h,m)$, and $(h,h)$ classes; Result 3 (among those finding out the payoff corresponding to telling the truth, luck has a positive effect) comes from participants in the $(m,l)$ class; and Result 4 (if there is a high payoff at the beginning of the list, a substantial proportion of people report the corresponding position) comes from the $(m,m)$ class. Of course, we are not saying that all these nine classes are non-empty, or that they have the same density (for example, since we found no evidence of random play, the class $(l,h)$ is likely to be empty, or negligible). The only point that we want to make is that this is a relatively simple way to explain our results, and that we can use it as guiding idea to develop the formal model.

\subsection{The model}

Here we introduce the formal model. We use a sequential problem in which, at each stage, a participant has to decide between reporting a position, or finding out the payoff corresponding to the next position. Finding out a payoff has a cost (that might be $0$), so as it has a cost to lie (that might be $0$, as well).

More precisely, we consider a sequential decision problem as follows. At time $t_0=0$, participant $p$ is given the following pieces of information:
\begin{itemize}
\item A private state of the world $\pi_g\in\Pi=\{\pi_1,\pi_2,\ldots,\pi_n\}$, where $\Pi$ is a finite set representing the potential states of the world. In our experiments, $\Pi$ is the set of all positions and $\pi_g$ is the actual position given to $p$. At time $t=0$, $p$ need not know the exact composition of the set $\Pi$. For example, in our experiments participants do not initially know the length of the list of potential payoffs.
\item A set $m(\Pi)$ of cardinality $\leq n$ of potential material payoffs (expressed in some unit of measurement, e.g., cents). At time $t=0$, $p$ knows that each payoff is drawn at random from the set $\{1,2,\ldots, M\}$, where $M$ is a positive integer (in our experiments $M=90$ cents).
\end{itemize}
At time $t_0=0$, $p$ can decide between two options: reporting a state of the world without knowing any payoff, or paying a cost to find out the payoff corresponding to the first state of the world. More precisely, the two available options at time $t_0=0$ are:

\begin{itemize}
\item \emph{Option R$_0$. Report a state of the world $\pi_r$ at time $t=0$}. The participant $p$ can either report the state of the world truthfully ($\pi_r=\pi_g$) or not ($\pi_r\neq\pi_g$). We assume that participants have a moral cost of lying (that might be $0$). We include the moral cost of lying into a utility function taking inspiration by Krupka \& Weber (2013). More precisely, we define the expected utility of reporting $\pi_r$ at time $t_0=0$ to be:
$$
U_p(\pi_r,\pi_g,0) = \frac{M+1}{2} - \lambda(p)(1-\chi_{\pi_r,\pi_g}),
$$
where 
$$
\chi_{\pi_r,\pi_g}=
\begin{cases}
1, \qquad\text{if } \pi_r=\pi_g\\
0, \qquad\text{if } \pi_r\neq\pi_g
\end{cases}
$$
The individual parameter $\lambda(p)\geq0$ represents the intrinsic cost of lying, which we assume to be an integer. We assume that (for virtually all participants) $\lambda$ does not depend on the magnitude of the lie, because Result 1 suggests that very few people stretch the truth. The interpretation of $\lambda(p)$ is clear: $p$ lies if the benefit of lying is greater than  $\lambda(p)$; if the benefit of lying is equal to $\lambda(p)$, then $p$ is indifferent. So, the expected utility of reporting $\pi_r$ at time $t_0=0$ is equal to the expected payoff of choosing to report a state of the world knowing that its corresponding payoff is drawn at random from the set $\{1,2,\ldots, M\}$, minus the intrinsic cost of lying, which is paid only when $p$ indeed lies (i.e., $\pi_r\neq\pi_g$).  Thus, the term $\frac{M+1}{2}$ comes from the fact that, at time $t_0=0$, $p$ does not know any of the payoffs and therefore $p$'s expected monetary payoff from reporting $\pi_r$ is $\frac{M+1}{2}$, independently of $\pi_r\in\Pi$.
\item \emph{Option F$_0$. Choose to pay a cost $\tau_1(p)$ to find out the first payoff $m(\pi_1)$ at time $t_1$}. We assume that the cost $\tau_1(p)\geq0$ is a integer parameter, depending only on the participant. 
\end{itemize}

And so on, at time $t=1,2,\ldots$ (we assume that time is discrete), if $p$ has not chosen to report a state of the world $\pi_r$ at time $t-1$, then $p$ will have to choose between two options:

\begin{itemize}
\item \emph{Option R$_t$. Report a state of the world $\pi_r$ at time $t$}. The expected utility of reporting $\pi_r$ at time $t$ is defined as:
$$
U_p(\pi_r,\pi_g,t) = \mathbb E(\pi_r,\pi_t)  - \lambda(p)(1-\chi_{\pi_r,\pi_g}) - \tau_1(p) - \ldots - \tau_{t}(p),
$$
where the expected material payoff $\mathbb E(\pi_r,\pi_t)$ of reporting $\pi_r$ after finding out $t$ payoffs is:
$$
\mathbb E(\pi_r,\pi_t)=
\begin{cases}
m(\pi_r), \qquad\text{if } t\geq r\\
\frac{M+1}{2}, \qquad\text{if } t< r
\end{cases}
$$
The interpretation of $\mathbb E(\pi_r,\pi_t)$ is obvious. If $t<r$, then $p$ does not know, at time $t$, the material payoff corresponding to reporting $\pi_r$. Therefore, reporting $\pi_r$ at time $t<r$ has expected material payoff $\frac{M+1}{2}$. Else if $t\geq r$, then $p$ knows, at time $t$, the material payoff corresponding to reporting $\pi_r$. Therefore, in this case, the expected material payoff of reporting $\pi_r$ at time $t$ is equal to the actual payoff associate to $\pi_r$, which is $m(\pi_r)$.
\item \emph{Option F$_t$. Choose to pay a cost $\tau_{t+1}(p)$ to find out the payoff $m(\pi_{t+1})$ at time $t+1$}. We assume that $\tau_{t+1}(p)\geq\tau_t(p)$. This property will be used in the proofs. We believe it to be a reasonable assumption, coming from the observation that memorizing $t+1$ payoffs may require more cognitive effort than memorizing $t$ payoffs. 
\end{itemize}

Our goal is to analyze the strategies that maximize people's expected utility. To this end, we will make the following two assumptions:

\begin{enumerate}
    \item Players are risk neutral
    \item Players are rational
\end{enumerate}

The first assumption is technically useful, but it does not change the core of the argument: the model can be easily adapted to risk averse (or risk seeking) players, by adding an additional parameter, which, everything else being equal, will make these players less (or more) likely to find out a new payoff. The second assumption is the standard assumption of rationality in game theory. In our case, it implies that, at time $t-1$, $p$ decides to find out $m(\pi_t)$ only if the expected utility of finding out $m(\pi_t)$ {\it and} deciding to report a state of the world at time $t$ is greater than the utility of reporting a state of the world at time $t-1$. Indeed, at time $t-1$, $p$ always prefer to find out $m(\pi_t)$ {\it and} choose to report a position at time $t$ over finding out $m(\pi_t)$ and then choosing to {\it not} report a position at time $t$ (because, in the latter case, $p$ will have to pay the additional cost $\tau_{t+1}$ to find out the next payoff). 

\subsection{Analysis}\label{suse:analysis}

We start by discussing the model in two extreme cases: $\tau_1(p)=\tau_2(p)=\ldots=0$ and $\lambda(p)=0$. Then we move to the general analysis. Here we report only the propositions along with a brief description. We refer to the Appendix for the proofs. 

\subsubsection{Case $\tau_1(p)=\tau_2(p)=\ldots=0$}

We make one technical assumption in order to avoid listing a number of trivial subcases, that differ from one another in only minor details. Specifically, let $\overline M = \max_{i=1,\ldots,n}m(\pi_i)$, we assume that $\overline M < M$.  This assumption guarantees that $p$ has always a non-zero probability of finding out a payoff larger than those that he has already found out, and therefore $p$ would not stop finding out payoffs simply because he has found the maximum available payoff. Note that this assumtpion is satisfied by our experiments, since $M=90$, but none of the numbers in the list of potential payoffs is equal to 90. 

\begin{proposition}\label{prop:time_cost_zero}
{\rm Let $p$ be a participant with $\tau_1(p)=\ldots=\tau_n(p)=0$. Then:
\begin{itemize}
\item If $\lambda(p) > M$, that is, if the intrinsic cost of lying is greater than the maximal payoff, $p$ maximizes their utility by finding out a random number of payoffs and by reporting the true state of the world, regardless of the payoffs that $p$ has found out;
\item If $\lambda(p)\leq M$, $p$ maximizes their utility by finding out all payoffs (note that this includes $\overline{M}$), and then:
\begin{itemize}
    \item reporting the truth, if $m(\pi_g) > \overline M - \lambda(p)$;
    \item reporting a state of the world $\pi$ such that $m(\pi)=\overline M$, if $m(\pi_g) < \overline M - \lambda(p)$;
    \item reporting a state of the world chosen at random between $\pi_g$ and $\pi$ such that $m(\pi)=\overline M$, if $m(\pi_g) = \overline M - \lambda(p)$.
\end{itemize}
\end{itemize}
}
\end{proposition}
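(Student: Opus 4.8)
The plan is to reduce the sequential problem to a short backward-induction (dynamic-programming) argument that exploits the fact that, with $\tau_1(p)=\dots=\tau_n(p)=0$, the total exploration cost is identically zero; the only forces left are the moral cost $\lambda(p)$ and the value of the information that exploring reveals. First I would simplify the utility: since every $\tau_i(p)=0$, reporting $\pi_r$ at any stage $t$ gives $U_p(\pi_r,\pi_g,t)=\mathbb E(\pi_r,\pi_t)-\lambda(p)(1-\chi_{\pi_r,\pi_g})$. I would then record one elementary dominance fact: reporting a still-uncovered position $\pi_r\neq\pi_g$ pays $\frac{M+1}{2}-\lambda(p)\le\frac{M+1}{2}$, the value of reporting the (also uncovered) truth, so a rational $p$ never lies to a position whose payoff has not been uncovered. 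Hence the only candidate reports at stage $t$ are the truth and the best already-revealed position, and the immediate report-now value is $R_t=\max\!\big(\mathbb E(\pi_g,\pi_t),\ \mu_t-\lambda(p)\big)$, where $\mu_t$ is the largest payoff revealed among positions $\neq\pi_g$.

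For the first bullet ($\lambda(p)>M$) I would argue that lying can never be optimal: any revealed lie pays at most $\overline M-\lambda(p)<M-\lambda(p)<0$ (using $\overline M<M<\lambda(p)$), whereas the truth pays either $m(\pi_g)\ge 1$ or $\frac{M+1}{2}$, both positive. So $p$ reports the truth with certainty, and the \emph{realized} utility is exactly $m(\pi_g)$ no matter how many payoffs were uncovered, since neither a lying cost nor an exploration cost is ever incurred. Because the attained utility is independent of the stopping time, $p$ is indifferent among all exploration lengths (``finds out a random number of payoffs'') and reports the truth regardless of what is seen, which is precisely the claim.

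The substance is the second bullet ($\lambda(p)\le M$). Here I would set up the value function $W_t=\max\!\big(R_t,\ \E_{m(\pi_{t+1})}[W_{t+1}]\big)$ with terminal condition $W_n=R_n$, the expectation being taken under $p$'s belief that each uncovered payoff is uniform on $\{1,\dots,M\}$. The key monotonicity lemma is that exploring is always weakly optimal, i.e.\ $\E[W_{t+1}]\ge R_t$. This holds because, at zero cost, the option set only grows: every action available at stage $t$ is still available at stage $t+1$ with the same expected utility. For the truth this uses that an uncovered truth is worth $\frac{M+1}{2}$ at $t$ and that, if it happens to be uncovered at $t+1$, its expected revealed payoff is again $\frac{M+1}{2}$; an already-revealed position keeps its payoff; and an uncovered lie keeps its value $\frac{M+1}{2}-\lambda(p)$ in expectation. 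Thus $W_{t+1}$ dominates the replicated stage-$t$ optimal action realization by realization, giving $\E[W_{t+1}]\ge R_t$ and $W_t=\E[W_{t+1}]$; iterating yields $W_0=\E[R_n]$, so ``explore until every payoff is uncovered, then report optimally'' attains the maximum. Evaluating the terminal choice at $t=n$, where the global maximum $\overline M$ sits at a position $\neq\pi_g$ (as in the experiments), gives $R_n=\max\!\big(m(\pi_g),\ \overline M-\lambda(p)\big)$, which produces exactly the three sub-cases: truth if $m(\pi_g)>\overline M-\lambda(p)$, a global maximizer if $m(\pi_g)<\overline M-\lambda(p)$, and a coin flip between them at equality.

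The main obstacle is the monotonicity lemma $\E[W_{t+1}]\ge R_t$: the informal ``more options, zero cost'' intuition must be made precise against the subjective uniform belief, and the delicate point is the stage at which exploration uncovers the truth's own payoff, where one must verify that the ex-ante value of reporting the truth is unchanged (it stays $\frac{M+1}{2}$) so that exploring is never strictly harmful. The assumption $\overline M<M$ enters here to guarantee that $p$ assigns positive probability to uncovering a still-larger payoff at every stage, ruling out degenerate early stopping and supporting the claim that uncovering \emph{all} payoffs, rather than merely some prefix, is optimal.
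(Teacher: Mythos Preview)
Your argument is correct and reaches the same conclusion as the paper, but by a noticeably more formal route. The paper's proof is essentially two sentences: it asserts directly that, since exploration is costless, finding out all payoffs is optimal (relegating the $\lambda(p)>M$ indifference case to a footnote), and then compares the terminal utilities $m(\pi_g)$ versus $\overline M-\lambda(p)$ to obtain the three subcases. You instead set up the dynamic program $W_t=\max(R_t,\E[W_{t+1}])$ and prove the monotonicity lemma $\E[W_{t+1}]\ge R_t$ by checking that every stage-$t$ action can be replicated in expectation at stage $t+1$; this makes precise the ``free information has non-negative value'' intuition the paper takes for granted, and in particular handles carefully the step where exploration reveals $m(\pi_g)$ itself. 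Your treatment of the $\lambda(p)>M$ case via realized utility (the report is always $\pi_g$, so the exploration length is payoff-irrelevant) is also cleaner than the paper's footnote. The trade-off is length: your backward-induction framework is more machinery than the statement strictly needs, since weak optimality of full exploration (which is all the proposition claims) follows immediately once one notes that any stage-$t$ report remains available at stage $t+1$ with unchanged expected payoff and zero marginal cost.
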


The intuition behind this proposition is very simple. The case $\lambda(p)> M$ is trivial: $p$ knows that no material payoff would counterbalance the intrinsic cost of lying, therefore $p$ will always tell the truth, after finding out a random number of payoffs (because $p$ does not pay any cost for finding out payoffs). Else, if $\lambda(p)\leq M$, then $p$ might lie depending on the payoffs. The fact that $p$ does not pay any cost for finding out payoffs and the fact that there is always a non-zero probability of finding out a larger payoff, imply that $p$'s best strategy is to find out all payoffs before reporting a state of the world. At this point, $p$ will report the truth or the state of the world corresponding to the maximum payoff, depending on the cost of lying: if the cost of lying is above a certain threshold, then $p$ will tell the truth; if the cost of lying is below that threshold, $p$ will lie; if the cost of lying is equal to the threshold, $p$ will be indifferent between telling the truth or lying. 

\subsubsection{Case $\lambda(p)=0$}

The case $\lambda(p)=0$ (that is, $p$ has no cost of lying) is intuitively simple. Since $p$ has no cost of lying, then $p$ will find out some payoffs (depending on $\tau_t(p)$) and then will report a position that maximizes their (possibly expected) material payoff. 

To formalize this idea, we need to compute the time at which $p$ stops finding out new payoffs. To do so, let $M_t = \max_{1,2,\ldots,t}\{m(\pi_1),\ldots,m(\pi_t),\lfloor\frac{M+1}{2}\rfloor\}$, where $\lfloor x\rfloor$ denote the highest integer which is smaller or equal than $x$. Observe that the sum $\frac{1}{M}\left(\sum_{k= M_t+1}^Mk\right)$ is weakly decreasing as $t$ increases, while the sequence $\tau_t(p)$, by assumption, is weakly increasing in $t$.  It follows that, if there is a $\overline t$ such that 

$$\frac{1}{M}\left(\sum_{k= M_t+1}^Mk\right)\geq\tau_t,\text{ for all } t\leq\overline t\qquad\text{and}\qquad\frac{1}{M}\left(\sum_{k= M_t+1}^Mk\right)<\tau_t,\text{ for all } t>\overline t
$$
then $\overline t$ is unique. We introduce the following notation: if $\overline t$ as just defined exists, we set $t^{last}=\overline t$; if, instead, $\overline t$ does not exist, we set $t^{last}=n$.

\begin{proposition}\label{prop:lying_cost_zero}
{\rm Suppose that $\lambda(p)=0$. The strategy that maximizes $p$'s utility is to first find out all payoffs up to $t^{last}$ and then report a state of the world according to the following three cases:
\begin{enumerate}
    \item if $t^{last}<n$ and $m(\pi_i)<\frac{M+1}{2}$, for all $i\leq t^{last}$, then $p$ maximizes their utility by reporting a state of the world, $\pi_r$, chosen at random among those for which $r>t^{last}$; 
    \item if $t^{last}<n$ and there exists a $\pi_i$, with $i\leq t^{last}$, such that $m(\pi_i)\geq\frac{M+1}{2}$, then  player $p$ maximizes their utility by reporting a state of the world $\pi_r$ such that $m(\pi_r)=\max_{i=1,\ldots,t^{last}}m(\pi_i) = M_{t^{last}}$.
    \item if $t^{last}=n$ then  player $p$ maximizes their utility by reporting a state of the world, $\pi_r$, such that $m(\pi_r)=\max_{i=1,\ldots,n}m(\pi_i)=\overline M$.
\end{enumerate}

}
\end{proposition}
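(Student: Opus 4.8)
The plan is to treat the $\lambda(p)=0$ case as a finite-horizon optimal stopping problem and to show that its optimal policy is the threshold rule encoded by $t^{last}$. First I would note that with $\lambda(p)=0$ the moral-cost term vanishes, so reporting $\pi_r$ after $t$ explorations gives $U_p(\pi_r,\pi_g,t)=\mathbb{E}(\pi_r,\pi_t)-\sum_{k=1}^{t}\tau_k(p)$, and the largest material payoff available at time $t$---reporting either an already-explored position or, for expected value $\tfrac{M+1}{2}$, a still-covered one---is exactly $M_t$. Thus the value of stopping at time $t$ is $M_t-\sum_{k=1}^{t}\tau_k(p)$, and the entire problem collapses to deciding when to stop exploring.

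Next I would compute the one-step comparison underlying the rationality assumption. Conditioned on the revealed payoffs, $m(\pi_{t+1})$ is uniform on $\{1,\dots,M\}$, so exploring position $t+1$ replaces the current best $M_t$ by $\max\{M_t,m(\pi_{t+1})\}$, whose expectation exceeds $M_t$ by the marginal benefit $\tfrac{1}{M}\sum_{k=M_t+1}^{M}(k-M_t)$. Comparing ``explore once more, then report'' with ``report now'' then shows that exploration is worthwhile exactly when this benefit exceeds the marginal cost $\tau_{t+1}(p)$. Since the benefit is weakly decreasing in $t$ and $\tau_t(p)$ is weakly increasing, this is precisely the crossover between a decreasing benefit and an increasing cost that defines $t^{last}$.

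The crux is to promote this myopic (one-step-lookahead) rule to the globally optimal policy, ruling out cleverer non-threshold strategies. I would do this by checking that the problem is a \emph{monotone} stopping problem: the marginal benefit $\tfrac{1}{M}\sum_{k=M_t+1}^{M}(k-M_t)$ is a decreasing function of $M_t$, and $M_t$ is nondecreasing in $t$ along every sample path, while $\tau_t(p)$ is nondecreasing by assumption. Hence once the benefit falls weakly below the cost it remains so forever, i.e. the stopping region is absorbing; this is the standard condition guaranteeing optimality of the one-step-lookahead rule and, simultaneously, the uniqueness of the crossover $\overline t$. Either by invoking that optimality result or by a direct backward induction showing the value function inherits the threshold structure, I conclude that the optimal strategy explores exactly $\pi_1,\dots,\pi_{t^{last}}$ before reporting (with $t^{last}=n$ when no crossover exists). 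I expect this monotonicity-implies-global-optimality step to be the main obstacle, as it is what justifies collapsing the full backward induction into the myopic comparison that the model's rationality assumption takes as primitive.

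Finally I would read off the reported state from the value of $M_{t^{last}}$. When $t^{last}=n$ all payoffs are known and the unexplored-position value $\tfrac{M+1}{2}$ is irrelevant, so a global maximizer is reported, giving case 3. When $t^{last}<n$ the question is whether the best explored payoff beats the expected value $\tfrac{M+1}{2}$ of reporting a still-covered position: if some $m(\pi_i)\ge\tfrac{M+1}{2}$ with $i\le t^{last}$ then $M_{t^{last}}=\max_{i\le t^{last}}m(\pi_i)$ and reporting that position is optimal (case 2), whereas if every explored payoff lies below $\tfrac{M+1}{2}$ then $M_{t^{last}}=\lfloor\tfrac{M+1}{2}\rfloor$ and any covered position $r>t^{last}$ maximizes expected material payoff (case 1). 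The remaining details---the integer/floor bookkeeping in the comparison against $\tfrac{M+1}{2}$ and the tie-breaking at equality---are routine.
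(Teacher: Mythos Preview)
Your proposal is correct and follows the same overall line as the paper, but is considerably more rigorous on the stopping-time part. The paper's proof simply asserts ``It is clear that $p$ finds out all the payoffs until time $t^{last}$'' and then verifies the three cases for the reported state exactly as you do. You, by contrast, set up the problem explicitly as a monotone optimal-stopping problem, compute the one-step marginal benefit $\tfrac{1}{M}\sum_{k=M_t+1}^{M}(k-M_t)$, and invoke the absorbing-stopping-region argument to show the one-step-lookahead rule is globally optimal. This additional work is sound but, strictly speaking, goes beyond what the paper requires: the paper's ``rationality'' assumption already \emph{stipulates} that $p$ explores iff the one-step comparison favors exploration, so within the paper's framework the threshold behavior is assumed rather than derived. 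Your monotonicity argument is therefore a welcome strengthening---it shows the paper's behavioral assumption coincides with full dynamic optimality here---but the paper itself does not supply (or need) that step. The final case split over whether $\max_{i\le t^{last}} m(\pi_i)$ beats $\tfrac{M+1}{2}$ is identical in both proofs.
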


\subsubsection{General case: $\lambda(p)>0$ and $\tau_t(p)>0$, for all $t$}

Now we analyze the model for $\lambda(p)>0$ and $\tau_t(p)>0$, for all $t$. (Clearly, the case $\tau_t(p)=0$ up to some $\overline t < n$, and $\tau_t(p)>0$, for $t>\overline{t}$ will be a combination of the previous case and this general case). Moreover, since $p$ will always be a fixed player, to simplify the notation, we will write $\lambda$ instead of $\lambda(p)$ and $\tau_t$ instead of $\tau_t(p)$. 

We start by looking at what happen at time $t_0=0$. We keep this case separated from the general case for two reasons. One is technical: at time $t_0=0$, $p$ does not know any of the payoffs, and therefore the details are slightly different than in the case $t>0$, in which $p$ knows at least one payoff. The other one is conceptual, because we want to explicitly show that there are values of $\lambda$ and $\tau$ such that people report the truth at time $t_0$, and this would be consistent with the analysis in Section \ref{se:results}, which suggests that not only there is a proportion of participants who report the truth without first finding out its corresponding payoff, but there is even a proportion of people who report the truth without finding out {\it any} payoff. 

\begin{proposition}\label{prop:time_zero}
{\rm It is optimal for participant $p$ to report the true state of the world $\pi_g$ at time $t_0=0$ if one of the following conditions is satisfied:
\begin{itemize}
    \item $\lambda\geq\frac{M+1}{2}$ and $\tau_1>0$;
    \item $g=1$, $\lambda<\frac{M+1}{2}$ and $\tau_1>\frac{1}{M}\sum_{k=\lfloor\frac{M+1}{2}\rfloor-\lambda+1}^M \left(k-\frac{M+1}{2}+\lambda\right)$;
    \item $g>1$, $\lambda<\frac{M+1}{2}$ and $\tau_{1}>\frac{1}{M}\sum_{k=\lfloor\frac{M+1}{2}\rfloor+\lambda+1}^M\left(k-\lambda-\frac{M+1}{2}\right)$.
\end{itemize}
}
\end{proposition}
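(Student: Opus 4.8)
The plan is to treat the proposition as a statement about the single decision taken at $t_0=0$, where $p$ chooses among reporting the truth, reporting some lie, or paying $\tau_1$ to uncover the first payoff (Option F$_0$). I would first dispose of the lies: any report $\pi_r\neq\pi_g$ at time $0$ gives $\frac{M+1}{2}-\lambda\le\frac{M+1}{2}$, the utility of reporting the truth, so truth weakly dominates every time-$0$ report and the proposition reduces to showing that, under each hypothesis, F$_0$ is not strictly preferred to reporting the truth. By the model's rationality assumption, $p$ prefers F$_0$ to reporting at time $0$ exactly when $-\tau_1+\mathbb{E}\,U_1^\star>\frac{M+1}{2}$, where $U_1^\star$ denotes the utility of the optimal report made at time $1$ after $m(\pi_1)$ is revealed, and the expectation is over $m(\pi_1)$ uniform on $\{1,\dots,M\}$. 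Thus the whole task becomes proving the inequality $-\tau_1+\mathbb{E}\,U_1^\star\le\frac{M+1}{2}$ in each of the three regimes.

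The second step computes $U_1^\star$, and the split on $g$ is what drives the three cases. If $g=1$ the uncovered payoff is the truth's own payoff, so the best report at time $1$ is $\max\!\bigl(m(\pi_1),\,\tfrac{M+1}{2}-\lambda\bigr)$, the better of telling the truth for $m(\pi_1)$ and lying to a still-covered position for $\tfrac{M+1}{2}-\lambda$; if $g>1$ the uncovered payoff belongs to a non-true position, so the best report is $\max\!\bigl(\tfrac{M+1}{2},\,m(\pi_1)-\lambda\bigr)$. I would evaluate both expectations with the elementary identity $\mathbb{E}\,\max(X,c)=c+\frac{1}{M}\sum_{k>c}(k-c)$ for $X$ uniform on $\{1,\dots,M\}$. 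When $\lambda\ge\frac{M+1}{2}$ the lying branch never binds, since $\tfrac{M+1}{2}-\lambda\le 0<m(\pi_1)$ for $g=1$ and $m(\pi_1)-\lambda\le M-\lambda<\tfrac{M+1}{2}$ for $g>1$, so $\mathbb{E}\,U_1^\star=\frac{M+1}{2}$ and $-\tau_1+\mathbb{E}\,U_1^\star<\frac{M+1}{2}$ for every $\tau_1>0$, giving the first bullet. When $\lambda<\frac{M+1}{2}$ the identity turns $-\tau_1+\mathbb{E}\,U_1^\star\le\frac{M+1}{2}$ into a lower bound on $\tau_1$: for $g>1$ this break-even bound is exactly the displayed sum $\frac{1}{M}\sum_{k>\frac{M+1}{2}+\lambda}(k-\lambda-\frac{M+1}{2})$, and for $g=1$ the displayed sum $\frac{1}{M}\sum_{k>\frac{M+1}{2}-\lambda}(k-\frac{M+1}{2}+\lambda)$ is at least the break-even value, so in both bullets the stated threshold on $\tau_1$ is sufficient. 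The only bookkeeping is rewriting the real cutoff as the integer summation limit $\lfloor\frac{M+1}{2}\rfloor-\lambda+1$ (for $g=1$) or $\lfloor\frac{M+1}{2}\rfloor+\lambda+1$ (for $g>1$), which is legitimate because $\lambda$ is an integer.

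The step I expect to be the main obstacle is justifying that the time-$0$ choice is really governed by the one-step comparison of $-\tau_1+\mathbb{E}\,U_1^\star$ against $\frac{M+1}{2}$, and not by the full value of continuing to uncover payoffs past time $1$: after seeing a low $m(\pi_1)$ the agent could in principle keep searching for a lucrative lie, which would raise the genuine value of F$_0$ above $-\tau_1+\mathbb{E}\,U_1^\star$. I would discharge this by leaning on the model's rationality postulate, which defines the decision to reveal one more payoff by precisely this one-step criterion, together with the monotonicity $\tau_{t+1}(p)\ge\tau_t(p)$ and the fact that the marginal expected gain from one more uniform draw is weakly decreasing in $t$; together these make the stopping decision single-crossing, so that the sign of the one-step comparison at $t_0=0$ controls whether $p$ ever leaves time $0$. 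With that reduction secured, the expectation computation above closes all three cases.
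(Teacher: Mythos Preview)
Your proposal is correct and follows essentially the same route as the paper's own proof: both first observe that at $t_0=0$ lying is dominated by truth, then compare the utility $\frac{M+1}{2}$ of truth-telling against the one-step lookahead value $-\tau_1+\mathbb{E}\,U_1^\star$, splitting on $g=1$ versus $g>1$ and appealing to the model's one-step rationality criterion (together with the monotonicity of $\tau_t$) to justify the myopic comparison. Your handling of the $g=1$ case is in fact a bit sharper than the paper's, since you correctly note that the displayed threshold exceeds the true break-even by~$\lambda$ and hence is merely sufficient, whereas the paper's cost--benefit computation there silently compares against $\frac{M+1}{2}-\lambda$ rather than $\frac{M+1}{2}$.
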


We remind that, in our two experiments, $g>1$. It follows that, according to this proposition, there are two classes of participants who report the true position without first finding out any payoff: participants with high cost of lying and non-zero cost of finding out payoffs, and participants with a medium cost of lying, but a high enough cost of finding out payoffs. This would correspond, in the informal description of the model in Section \ref{se:informal}, to $(h,m)$, $(h,h)$ and $(m,h)$ participants.

We can now present the general case. Notice that, if the starting point of a sum is greater than its ending point, we follow the standard convention and we define the sum to be equal to zero.

  \begin{proposition}\label{prop:general}
  {\rm Suppose that $p$ has found out all payoffs up to $m(\pi_t)$, then:
  \begin{itemize}
      \item $p$ maximizes his expected utility by reporting, at time $t$, the true state of the world, if one of the following conditions is satisfied:
      \begin{itemize}
          \item $g\leq t$, $m(\pi_g)> M_t-\lambda$, and $
     \tau_{t+1}>\frac{1}{M}\sum_{k=m(\pi_g)+\lambda+1}^M(k-\lambda-m(\pi_g))
     $; or 
     \item $g=t+1$, $M_t-\lambda<\frac{M+1}{2}$, and $
         \tau_{t+1}>\frac{1}{M}\sum_{k=1}^{M_t-\lambda}\left(M_t-\lambda-\frac{M+1}{2}\right)+\frac{1}{M}\sum_{k=M_t-\lambda+1}^{M}\left(k-\frac{M+1}{2}\right)
         $; or
        \item $g>t+1$, $M_t-\lambda<\frac{M+1}{2}$, and  $
         \tau_{t+1}>\frac{1}{M}\sum_{k=\lfloor\frac{M+1}{2}\rfloor+\lambda+1}^M\left(k-\lambda-\frac{M+1}{2}\right)
         $.
      \end{itemize}

      \item $p$ maximizes his expected utility by reporting, at time $t$, a state of the world corresponding to a payoff of $M_t$ (if there is more than one, then $p$ chooses at random among them), if one of the following conditions is satisfied:
      \begin{itemize}
          \item $g\leq t$, $m(\pi_g)< M_t-\lambda$, and $
     \tau_{t+1}>\frac{1}{M}\sum_{k=m(\pi_g)+\lambda+1}^M(k-\lambda-m(\pi_g))
     $; or 
     \item $g=t+1$, $M_t-\lambda>\frac{M+1}{2}$, and $
         \tau_{t+1}>\frac{1}{M}\sum_{k=1}^{M_t-\lambda}\left(M_t-\lambda-\frac{M+1}{2}\right)+\frac{1}{M}\sum_{k=M_t-\lambda+1}^{M}\left(k-\frac{M+1}{2}\right)
         $; or
        \item $g>t+1$, $M_t-\lambda>\frac{M+1}{2}$, and  $
         \tau_{t+1}>\frac{1}{M}\sum_{k=\lfloor\frac{M+1}{2}\rfloor+\lambda+1}^M\left(k-\lambda-\frac{M+1}{2}\right)
         $.
      \end{itemize}
      \item In all other situations, $p$ maximizes his expected utility by paying, at time $t$, a cost $\tau_{t+1}$ to find out $m(\pi_{t+1})$ at time $t+1$.
  \end{itemize}
  }
  \end{proposition}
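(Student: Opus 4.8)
The plan is to treat this as a one-step optimal stopping problem. First I would argue that the myopic comparison dictated by the rationality assumption is in fact globally optimal: the expected marginal benefit of uncovering one additional payoff is weakly decreasing in $t$ (the running maximum $M_t$ can only grow, shrinking the right tail that a fresh draw can improve upon), whereas the cost $\tau_{t+1}$ is weakly increasing by assumption; hence once it is not worth drawing one more payoff it is never worth drawing again, and it suffices to compare stopping at time $t$ against the policy ``uncover $m(\pi_{t+1})$ and then stop at $t+1$.'' This reduces the entire proposition to computing a single expected one-step gain and comparing it to $\tau_{t+1}$.

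Next I would pin down the value of stopping at time $t$ together with the associated report, which is the easy half. Upon stopping, the best truthful report yields material payoff $m(\pi_g)$ when $g\le t$ (the true payoff is already known) and the prior mean $\frac{M+1}{2}$ when $g>t$, while the best dishonest report yields $M_t-\lambda$. Comparing these two quantities produces exactly the dichotomies in the statement: for $g\le t$ one reports the truth iff $m(\pi_g)>M_t-\lambda$ and a position attaining $M_t$ iff $m(\pi_g)<M_t-\lambda$, whereas for $g\ge t+1$ the comparison is $M_t-\lambda$ against $\frac{M+1}{2}$. This settles \emph{what} is reported conditional on stopping.

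The substantive step is the expectation after one more draw. Writing $k$ for the freshly uncovered payoff $m(\pi_{t+1})\sim\mathrm{Unif}\{1,\dots,M\}$, I would compute $\mathbb{E}\big[\text{(best stop value at }t+1)\big]$, where that value is $\max\!\big(\text{truth value},\,\max(M_t,k)-\lambda\big)$. The crucial case split is on the location of $g$: if $g\le t$ the draw is an independent new payoff; if $g>t+1$ it is again a fresh non-true payoff, so the truthful branch still contributes $\frac{M+1}{2}$; but if $g=t+1$ the draw \emph{reveals the true payoff} $m(\pi_g)=k$, coupling the truthful value to the very random variable being integrated. In each case the expectation collapses to a truncated sum of the form $\frac{1}{M}\sum\max(\cdot)$, and subtracting the stopping value of the previous paragraph—using $\frac{1}{M}\sum_{k=1}^{M}k=\frac{M+1}{2}$ to rewrite terms—yields the expected gain as a right-tail expectation $\mathbb{E}\big[\max(0,\text{lie payoff}-\text{current best})\big]$. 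Requiring this gain to fall below $\tau_{t+1}$ reproduces precisely the displayed thresholds, e.g. $\frac{1}{M}\sum_{k=m(\pi_g)+\lambda+1}^{M}(k-\lambda-m(\pi_g))$ in the $g\le t$ case and the two-sum expressions split at $k=M_t-\lambda$ in the $g=t+1$ case.

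The main obstacle will be the $g=t+1$ case, where the uncovered payoff coincides with the truthful payoff: here I must integrate $\max\!\big(k,\,M_t-\lambda\big)$ against $k$ itself rather than against an independent draw, and then re-express the outcome so that it matches the stated two-part sum. The remaining bookkeeping—keeping the strict versus weak inequalities straight at the indifference boundaries $m(\pi_g)=M_t-\lambda$ and $M_t-\lambda=\frac{M+1}{2}$, and invoking the empty-sum convention when a lower summation limit exceeds $M$—is routine. Finally, the third bullet (continue) follows by elimination together with the one-step optimality established at the outset: whenever $\tau_{t+1}$ lies below the relevant threshold, the expected gain from one more draw strictly exceeds its cost, so $p$ strictly prefers Option $\mathrm{F}_t$, and by the monotonicity argument this is consistent with an optimal policy that keeps uncovering payoffs.
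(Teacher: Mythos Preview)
Your approach is essentially the same as the paper's: a case split on the location of $g$ relative to $t$ (namely $g\le t$, $g=t+1$, $g>t+1$), computation of the optimal stopping value at time $t$ via the truth-versus-$M_t$ comparison, and then a cost--benefit comparison of $\tau_{t+1}$ against the expected one-step improvement from uncovering $m(\pi_{t+1})$. The only substantive addition in your write-up is the explicit monotonicity argument (expected marginal gain weakly decreasing, $\tau_t$ weakly increasing) justifying why the myopic one-step comparison is globally optimal; the paper does not prove this in the proposition but instead bakes it into the ``rationality'' assumption stated in the model section, so your version is slightly more self-contained on that point but otherwise identical in structure and computation.
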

  
  \subsubsection{The predictions of the model are consistent with the empirical results}
  
  We conclude by observing that the previous propositions are consistent with the qualitative empirical results listed in Section \ref{se:results}. 
  
  \bigskip
  
  \emph{Consistency with Result 1}
  
  All propositions above affirm that the state of the world that maximizes participants' expected utility is either the true state of the world, or a state of the world that maximizes the payoff of a sublist of the shape $\{m(\pi_1),\ldots, m(\pi_t)\}$, with $t\leq n$.
 
This is broadly consistent with Result 1. In particular, the fact that the model predicts that no one stretches the truth ultimately follows from the assumption that the cost of lying $\lambda(p)$ does not depend on some distance between the lie and the truth. Had we assumed that $\lambda(p)$ depends on the magnitude of the lie, the predictions would have been different. Our empirical data show that a small proportion (less than 5\%) of subjects stretch the truth. These subjects can be covered by the model by assuming that, for these subjects, the cost of lying depends on the distance between the lie and the truth.

\bigskip

\emph{Consistency with Result 2}
  
Proposition \ref{prop:time_zero} and Proposition \ref{prop:general} show that there are values of $\lambda(p)$ and $\tau(p)$ for which it is in $p$'s best interest to report the truth without finding out the corresponding payoff. This is consistent with Result 2. 

We note, additionally, that Proposition \ref{prop:time_zero} shows that there are values of $\lambda(p)$ and $\tau(p)$ for which it is in $p$'s best interest to report the truth without finding out {\it any} payoff. This is consistent with the analysis in Section \ref{suse:analysis} regarding Study 2, which shows that the average response time of people who report the true position and declare that they have not read the list is smaller than the average response time of people who report Position 2 (remember that, in Study 2, Position 2 corresponds to a high local maximum), suggesting that the people who report the true position and declare that they have not read the list, did not even start reading the list.

\bigskip

\emph{Consistency with Result 3}

The third result states that, among subjects who find out the payoff corresponding to telling the truth, $m(\pi_g)$, this payoff has a positive effect on honesty.

To see whether this result is consistent with the model, we have to look at two separate cases, depending on whether the costs of finding out payoffs are all zero or not. 

If $\tau_1(p)=\ldots=\tau_n(p)=0$, Proposition \ref{prop:time_cost_zero} guarantees that the greater $m(\pi_g)$ is, the more likely is $p$ to report the truth, which is indeed consistent with Result 3. 

Else if $\tau_i(p)>0$, for some $i$ that we can assume to be, without loss of generality, equal to $1$, we have to look at Proposition \ref{prop:general}, case $g\leq t$ (i.e., $p$ has found out all payoffs up to at least $m(\pi_g)$). In this case, we note that the condition for reporting the truth depends on $m(\pi_t)$ in such a way that, when $m(\pi_t)$ increases, the condition is more likely to be satisfied. Thus, also this proposition predicts a positive effect of $m(\pi_g)$ on honesty among subjects who found $m(\pi_g)$ out. 

\bigskip

\emph{Consistency with Result 4}

Result 4 states that, if there is a high potential payoff (but smaller than the global maximum) at the beginning of the list, then a significant proportion of subjects report the position corresponding to this high payoff. 

To see whether this result is consistent with the model, we have to distinguish two cases, depending on whether the cost of lying of a participant is equal to zero or not. 

If $\lambda(p)=0$, Proposition \ref{prop:lying_cost_zero} states that $p$ reports a state of the world corresponding to the maximum of the sublist $\{m(\pi_1),\ldots, m(\pi_{t^{last}})\}$. Now, remember that what we denoted as $t^{last}$ depends negatively on $M_t=\max\{m(\pi_1),\ldots,m(\pi_t),\lfloor\frac{M+1}{2}\rfloor\}$. Therefore, putting a high potential payoff at the beginning of the list, makes $M_t$ increase, which in turn makes $t^{last}$ decrease and, by Proposition \ref{prop:lying_cost_zero}, makes $p$ more likely to report the position corresponding to $M_t$.  

Else if $\lambda(p)>0$, Proposition \ref{prop:general} states that $p$ reports a state of the world corresponding to $M_t$ if certain conditions are satisfied. We need to show that these conditions depend positively on $M_t$, that is, larger $M_t$'s make these conditions more likely to be satisfied. Three of these conditions ($m(\pi_g)< M_t-\lambda$, $M_t-\lambda>\frac{M+1}{2}$, and $M_t-\lambda>\frac{M+1}{2}$), obviously depend positively on $M_t$. The only non-trivial condition is 
$$
\tau_{t+1}>\frac{1}{M}\sum_{k=1}^{M_t-\lambda}\left(M_t-\lambda-\frac{M+1}{2}\right)+\frac{1}{M}\sum_{k=M_t-\lambda+1}^{M}\left(k-\frac{M+1}{2}\right).
$$

To show that this condition depends positively on $M_t$, we have to show that its right hand side decreases with $M_t$. Note that, when passing from $M_t$ to $M_t+1$,
 the first summand of the right hand side increases by $(M_t-\lambda-\frac{M+1}{2})/M$, while its second summand decreases by $(M_t-\lambda+1-\frac{M+1}{2})/M.$ Since the decrease of the right hand side is greater than its increase, then, overall, the right hand side decreases as $M_t$ goes to $M_t+1$. This implies that the right hand side decreases with $M_t$, as wanted.

  \section{Discussion}\label{se:discussion}
  
  Previous work studied (dis)honesty in situations in which the decision maker knows the economic consequences of lying vs telling the truth (e.g., die-under-cup paradigm) or in situations in which any uncertainty about these consequences cannot be cleared in anyway (e.g., Sender-Receiver game). However, in reality, people often know that they \emph{will} have a chance to lie, they do not initially know the economic consequences of the available choices, but they can invest time to find them out. 
  
  The first goal of this work is to capture the core of this kind of situations by means of a novel decision problem. Participants are initially given a list of potential payoffs and a position on this list. The position is their private information. Participants are told that their job is to report their position and that they will be paid an amount of money equal to the payoff in the position they report. Therefore, participants can lie by reporting a position corresponding to a payoff higher than the payoff in their given position.
  
  We conducted two experiments, which provided evidence of four major results: (i) Very few participants stretch the truth, that is, the vast majority of participants either report the true position or a position that is \emph{not} adjacent to the true position - in this latter case, they almost always report a position corresponding to a maximum (either global or local); (ii) There is a significant proportion of participants who report the true position without first finding out its corresponding payoff; (iii) Among the participants who find out the payoff corresponding to reporting the true position, this payoff has a positive effect on honesty; (iv) If there is a high potential payoff (but smaller than the global maximum) at the beginning of the list, a significant proportion of participants report the position corresponding to that high payoff.
  
  These results suggest that participants vary in (at least) two dimensions: the intrinsic cost of lying, and the cost of finding out new payoffs. 
  
  Starting from this observation, we built a model in which participants have to decide at what time $t^{last}$ to stop finding out payoffs, and what position $\pi_r$ to report. The utility is assumed to depend on the (expected) material payoff corresponding to reporting a position at a given time, on whether they lie or not (if so, they pay the intrinsic cost of lying), and on the cost of finding out the payoffs. We have shown that this model is consistent with the four empirical results. 

 Therefore, our work makes a step forward with respect to previous research from both the empirical and the theoretical viewpoints. Empirically, we reported the first studies exploring dishonesty in situations in which people know that they will have a chance to lie, they do not initially know the economic consequences of the available actions, but they can invest time to find them out. Theoretically, we proposed a model that is consistent with observed behavior in these situations.   
 
Additionally, our results contribute also to the debate on whether the intrinsic cost of lying depend on the magnitude of the lie or not. Mazar, Amir and Ariely (\citeyear{mazar2008dishonesty}) and Fischbacher and F\"olmi-Heusi (\citeyear{fischbacher2013lies}) advanced the hypothesis that the marginal cost of the lie is increasing in the magnitude of the lie, which implies that people might lie a bit, but not maximally. \cite{gneezy2018lying} recently found that most people lie maximally, but there are also people who lie partially, and this partly depends on whether reputation is at stake. Our results are broadly in line with this view. In our experiments, most liars lied maximally, but there was also small proportion of participants stretching the truth (below 5\%).

Finally, our results also contribute to the debate on whether honesty is intuitive or deliberative. Indeed, we have also conducted a time pressure study (analyzed in the Appendix). This is a significant contribution in itself, because the role of time pressure and, more generally, the role of intuition on honesty has been at the center of the debate in the last years \citep{andersen2018allowing,cappelen2013we,capraro2017does,capraro2019time,debey2012lying,gino2011unable,gunia2012contemplation,lohse2018deception,mead2009too,shalvi2012honesty,spence2001behavioural,van2014limited,walczyk2003cognitive}, as a part of the more general research program of classifying social behaviors according to whether they are intuitive or reflective \citep{capraro2015social,capraro2016rethinking,capraro2017deliberation,corgnet2015cognitive,lotz2015spontaneous,rand2012spontaneous,rand2014social,rand2016social,rand2016cooperation}. In our case, time pressure has no effect on the overall rate of honesty. However, it does have an effect on the distribution of choices: (i) it transforms some would-be global maximizers into local-maximizers, and (ii) it generates a small proportion of confused subjects. 

In sum, we introduced a novel decision problem to study dishonesty in situations in which people know that they will have a chance to lie, they do not initially know the economic consequences of lying vs telling the truth, but they can invest time to find them out. We also introduced a model, according to which people pay a cost of lying (if they lie) and a cost of finding out the economic consequences (if they find them out), that is consistent with the empirical results.

  \section{Acknowledgement}

This material is based upon work supported by the National Science Foundation under Grant No. 1440140, while both authors were in residence at the Mathematical Sciences Research Institute in Berkeley, California.

\pagebreak

\bibliographystyle{apa}
\bibliography{lying}

\pagebreak

\appendix

\section{Experimental instructions}

We report only the instructions of Study 1. Those of Study 2 were the same, a part from the differences in the list and in the self-report question discussed in Section \ref{se:study2}.

\emph{Screen 1}

Welcome to this HIT.

This is an \textbf{anonymous} HIT. It will take about ten minutes. For your participation, you will earn 50c and some additional money to be determined later.

\bigskip

\emph{Screen 2}

In the next screens, you will generate two pieces of information:
\begin{itemize}
    \item A \textbf{POSITION}, which will be a number, for example 14
    \item A \textbf{LIST} of potential bonuses between 1 and 90 cents, such as:
    $$
    23\,\,\,\,73\,\,\,\,34\,\,\,\,22\,\,\,\,2\,\,\,\,11\,\,\,\,54\,\,\,\,21\,\,\,\, 44\,\,\,\,3\,\,\,\,22\,\,\,\,6\,\,\,\,89\,\,\,\,45\,\,\,\, 67\,\,\,\, 23\,\,\,\, 65\,\,\,\,46\,\,\,\, 77\,\,\,\, 1\,\,\,\, 86\,\,\,\, 5\,\,\,\, 4
    $$
\end{itemize}

Your bonus for this HIT will be the one in the position you generated. Thus, in this case, it will be 45c, because 45 is in position 14.

 What is your job? 
 
 REPORT the POSITION you generated.

\underline{\textbf{IMPORTANT}: This survey is anonymous. We do not know the position that you will generate.}

 We now ask some comprehension questions in order to ascertain that you
 understood the situation.
 
 Assume that you generate:
 
POSITION: 13

LIST: $35\,\,\,\,24\,\,\,\,36\,\,\,\,82\,\,\,\,14\,\,\,\,3\,\,\,\,29\,\,\,\,21\,\,\,\, 55\,\,\,\, 64\,\,\,\,35\,\,\,\,33\,\,\,\,67\,\,\,\,2\,\,\,\, 13\,\,\,\, 21\,\,\,\, 65\,\,\,\, 46\,\,\,\, 8\,\,\,\, 43\,\,\,\, 32$

What \textbf{position} did you generate? (blank text box where to type the answer below the question)

What would be your bonus if you report the \textbf{position} you generated? (blank text box where to type the answer below the question)

What \textbf{position} could you report if you wanted to earn a bigger bonus? (blank text box where to type the answer below the question)

\bigskip

\emph{Screen 3}

Assume that you generate:
 
POSITION: 9

LIST: $53\,\,\,\,8\,\,\,\,24\,\,\,\,27\,\,\,\,76\,\,\,\,61\,\,\,\,65\,\,\,\,14\,\,\,\, 34\,\,\,\, 25\,\,\,\,22\,\,\,\,33\,\,\,\,73\,\,\,\,68\,\,\,\, 86\,\,\,\, 2\,\,\,\, 13\,\,\,\, 25\,\,\,\, 71\,\,\,\, 78\,\,\,\, 63\,\,\,\,27\,\,\,\,51\,\,\,\,11$

What \textbf{position} did you generate? (blank text box where to type the answer below the question)

What would be your bonus if you report the \textbf{position} you generated? (blank text box where to type the answer below the question)

What \textbf{position} could you report if you wanted to earn a bigger bonus? (blank text box where to type the answer below the question)

\bigskip

\emph{Screen 4}

Congratulations, you have passed all comprehension questions.

In the next screen, you will generate a POSITION. In the screen after, you will generate a SEQUENCE of potential bonuses and you will be asked to report the POSITION you generated.

Remember that you will receive a payment in cents equal to the number in the position \underline{\textbf{you report}}.

\bigskip

\emph{Screen 5 (Unlucky condition)}

POSITION: 19

Please take note of this position on a piece of paper.

Now click the next button in order to generate a sequence.

\bigskip

\emph{Screen 5 (Lucky condition)}

POSITION: 22

Please take note of this position on a piece of paper.

Now click the next button in order to generate a sequence.

\bigskip

\emph{Screen 6}

LIST: 

$25\,\,\,\, 3\,\,\,\, 63\,\,\,\, 54\,\,\,\, 28\,\,\,\, 70\,\,\,\, 37\,\,\,\, 36\,\,\,\, 26\,\,\,\, 31\,\,\,\, 43\,\,\,\, 15\,\,\,\, 30\,\,\,\, 60\,\,\,\, 33\,\,\,\, 37\,\,\,\, 15\,\,\,\, 63\,\,\,\, 16\,\,\,\, 50\,\,\,\, 4\,\,\,\, 71\,\,\,\, 79\,\,\,\, 2\,\,\,\, 85\,\,\,\, 48$

What POSITION did you generate?

(blank text box where to type the answer here)

(The Time Pressure condition differed from this condition only in that right after the list and right before asking participants ``What POSITION did you generate?'', we added the sentence: REPLY WITHIN 15 SECONDS. OTHERWISE YOU WON'T GET ANY BONUS''.)

\section{Analysis of the Time Pressure condition}

Here we analyze the effect of time pressure on honesty. A manipulation check confirms that subjects under time pressure took much shorter to make a decision than subjects who decided with no time constraint (13.8s vs 24.4s, ranksum: p $< .001$). 

Overall, we find that the rate of honesty under time pressure is very similar to the rate of honesty without time constraint (82.3\% vs 84.1\%). Logit regression predicting \emph{Honesty} (dummy variable) as a function of \emph{Condition} (Time Pressure vs No Time Constraint) confirms that the rate of honesty in the two conditions are not significantly different (without control on sex, age, education: coeff $= -0.123$, z $= -0.64$, p $= 0.524$; with control: coeff $= -0.156$, z $= -0.78$, p $= 0.437$). This remains true when we split the sample in the Lucky versus Unlucky condition. In the Lucky condition the rates of honesty in the Time Pressure and in the No Time Constraint conditions were, respectively, 85.9\% and 91.4\% (logit regression without control: coeff $= 0.567$, z $= -1.60$, p $= 0.109$; with control: coeff $= -0.560$, z $= -1.58$, p $= 0.115$). In the Unlucky condition, the rates od honesty in the Time Pressure and in the No Time Constraint conditions were, respectively, 78.9\% and 77.6\% (logit regression without control: coeff $= 0.076$, z $= 0.30$, p $= 0.765$; with control: coeff $= 0.063$, z $= 0.24$, p $= 0.808$).

Therefore, time pressure has no effect on the overall rate of honesty. However, although time pressure has no effect on the rate of honesty, we find that it has the effect of changing the distribution of positions (Kolmogorov-Smirnov, p $= 0.038$). Where does this effect come from, if the overall rate of honesty is constant across time manipulation conditions? Table 1 summarizes the distribution of choices in the Time Pressure and in the No Time Constraint condition. The number of honest choices is essentially the same in the Time Pressure and in the No Time Constraint (294 vs 291), so is the same the number of people stretching the truth (5 people in each condition). The only difference between the Time Pressure condition and the No Time Constraint condition is given by: (i) many of the people who would choose the global maximum in the No Time Constraint condition end up choosing an early local maximum in the Time Pressure condition (arguably because they have no time to find out all the payoffs up to the global maximum); (ii) twelve participants are not classifiable (arguably because some subjects get confused by the time pressure). 

\begin{table}[h]
  \centering
  \caption{Distributions of the reported positions in the Time Pressure and in the No Time Constraint conditions of Study 1.}
    \begin{tabular}{| l | c | c |}
    \hline
     & Time Pressure & No Time Constraint \\
    \hline
    \# honest & 294 & 291 \\
    \hline
    \# choosing global max & 12 & 43 \\
\hline
    \# choosing local max & 34 & 5 \\
    \hline
    \# stretching the truth & 5 & 5 \\
    \hline
    \# not classifiable in one of the above & 12 & 0 \\
    \hline
    \end{tabular}
\end{table}

\section{Proofs}

\begin{prop}
{\rm Let $p$ be a participant with $\tau_1(p)=\ldots=\tau_n(p)=0$. Then:
\begin{itemize}
\item If $\lambda(p) > M$, that is, if the intrinsic cost of lying is greater than the maximal payoff, $p$ maximizes their utility by finding out a random number of payoffs and by reporting the true state of the world, regardless of the payoffs that $p$ has found out;
\item If $\lambda(p)\leq M$, $p$ maximizes their utility by finding out all payoffs (note that this includes $\overline{M}$), and then:
\begin{itemize}
    \item reporting the truth, if $m(\pi_g) > \overline M - \lambda(p)$;
    \item reporting a state of the world $\pi$ such that $m(\pi)=\overline M$, if $m(\pi_g) < \overline M - \lambda(p)$;
    \item reporting a state of the world chosen at random between $\pi_g$ and $\pi$ such that $m(\pi)=\overline M$, if $m(\pi_g) = \overline M - \lambda(p)$.
\end{itemize}
\end{itemize}
}
\end{prop}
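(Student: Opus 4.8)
The plan is to exploit the fact that, with $\tau_1(p)=\cdots=\tau_n(p)=0$, the utility of reporting $\pi_r$ after revealing $t$ payoffs collapses to $U_p(\pi_r,\pi_g,t)=\mathbb E(\pi_r,\pi_t)-\lambda(p)(1-\chi_{\pi_r,\pi_g})$, with no accumulated acquisition cost. Thus the only thing to determine is the optimal information-acquisition (stopping) policy together with the final report, and I would organize the argument around the rationality assumption, which reduces the decision at each time $t-1$ to a comparison between \emph{reporting now} and \emph{revealing one more payoff and then deciding}. I treat the two regimes $\lambda(p)>M$ and $\lambda(p)\le M$ separately.

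For $\lambda(p)>M$ I would first show that lying is never profitable. The largest material payoff any lie can secure is $\overline M$, so the best lie yields utility at most $\overline M-\lambda(p)$, which is strictly negative because $\overline M<M<\lambda(p)$; meanwhile a truthful report yields $m(\pi_g)\ge 1$ once $\pi_g$ is known, and expected value $\tfrac{M+1}{2}$ beforehand, both strictly positive. Hence $p$ reports the truth no matter which payoffs have been revealed. Revealing a further payoff therefore never changes the optimal action and, since $\tau\equiv 0$, never changes the utility (the expected value of the truthful report is $\tfrac{M+1}{2}$ before $m(\pi_g)$ is seen and $m(\pi_g)$ afterward, which coincide in expectation and carry no decision-relevant content). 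So $p$ is exactly indifferent at every stopping time, which is precisely what ``finding out a random number of payoffs and reporting the true state of the world'' encodes.

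The substantive case is $\lambda(p)\le M$, where I would prove that $p$ reveals \emph{all} payoffs by an option-value argument for free information. At any time $t<n$ continuing is costless, so the value of stopping can always be matched by continuing (one may still report the truth, or the best lie found so far), while continuing adds the option of lying to a larger payoff that may yet be revealed. The key point is that, under $p$'s belief that each unseen payoff is independent and uniform on $\{1,\dots,M\}$, and because the assumption $\overline M<M$ forces the current best material payoff to stay $\le\overline M<M$, there is always strictly positive probability that the next payoff exceeds the current best; equivalently, learning $m(\pi_{t+1})$ (and, eventually, $m(\pi_g)$) has positive probability of changing the optimal report. Hence continuing \emph{strictly} dominates stopping at every $t<n$, forcing $p$ to reveal all $n$ payoffs. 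I expect this to be the main obstacle: making ``information is strictly valuable at each step'' precise is exactly where the hypothesis $\overline M<M$ is indispensable, since it prevents $p$'s belief from ever collapsing to certainty that the maximum has been found. I would formalize it by backward induction on $t$ from $n$, showing that the value of ``continue optimally'' weakly exceeds that of ``stop now,'' with a strict gap whenever the current best payoff is below $M$.

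Finally, at $t=n$ every payoff is known, so $\mathbb E(\pi_r,\pi_n)=m(\pi_r)$ for all $r$, and because $\lambda(p)$ does not depend on the magnitude of the lie, the best lie reports a state attaining $\overline M$ and yields utility $\overline M-\lambda(p)$, to be compared with $m(\pi_g)$ for the truth. Comparing these two quantities gives exactly the three sub-cases of the statement: report the truth when $m(\pi_g)>\overline M-\lambda(p)$, report a state with payoff $\overline M$ when $m(\pi_g)<\overline M-\lambda(p)$, and be indifferent (random choice between the two) when $m(\pi_g)=\overline M-\lambda(p)$.
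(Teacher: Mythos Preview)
Your proposal is correct and follows essentially the same approach as the paper: exploit that information is free to argue full revelation is (weakly) optimal, then compare the truthful report against the maximal lie at the end. Your treatment is in fact more careful than the paper's very terse proof, which simply asserts ``since $p$ pays no cost to find out new payoffs, then for $p$ it is optimal to find out all payoffs'' and relegates the edge cases to a footnote.

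One small overstatement to tighten: in the $\lambda(p)\le M$ case you claim continuing \emph{strictly} dominates stopping at every $t<n$ because ``there is always strictly positive probability that the next payoff exceeds the current best; equivalently, learning $m(\pi_{t+1})$ has positive probability of changing the optimal report.'' The ``equivalently'' is not quite right: if $g\le t$ and $m(\pi_g)+\lambda\ge M$, then even a payoff of $M$ would not beat the truth, so no future observation can change the optimal report and continuation is only \emph{weakly} optimal. This does not affect the proposition (which only claims that full revelation maximizes utility, not that it is the unique maximizer), and the paper itself concedes in a footnote that other optimal strategies coexist in such extreme cases. Your backward-induction plan with weak dominance is exactly what is needed; just drop the claim of a strict gap in general.
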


\begin{proof}
Since $p$ pays no cost to find out new payoffs, then for $p$ it is optimal to find out all payoffs\footnote{There are some extreme cases in which $p$ has also other optimal strategies, but they also lead $p$ to either tell the truth or lie maximally. More precisely, if $\lambda(p)$ is large enough, then $p$ is indifferent between finding out a new payoff and reporting the truth; if one of the payoffs is equal to $M$, the maximal potential value, then $p$ is indifferent between reporting the state of the world corresponding to $M$ and finding out a new payoff. In any case, $p$ either reports the truth or lie maximally.}. Now, after finding out all payoffs, if $p$ reports the true state of the world, $\pi_g$, then $p$ gets a utility of $m(\pi_g)$. If $p$ reports a state of the world $\overline\pi$ such that $m(\overline\pi)=\overline M$, then $p$ gets a utility $U(\pi)=\overline M-\lambda$, if $\overline\pi\neq\pi_g$, and $U(\pi)=\overline M$, if $\overline\pi=\pi_g$. If $p$ reports a state $\pi'\neq\overline\pi,\pi_g$, then $p$ gets a utility $U(\pi') < U(\overline\pi)$. Thus, $p$ will either tell the truth or lie maximally. No other state of the world maximize $p$'s utility. Moreover, if $m(\pi_g) > \overline M - \lambda(p)$, then $p$ reports the truth; if $m(\pi_g) < \overline M - \lambda(p)$, then $p$ lies maximally, by reporting $\overline\pi$; if $m(\pi_g) = \overline M - \lambda(p)$, then $p$ is indifferent between reporting the truth and lying maximally.
\end{proof}

\begin{prop}
{\rm Suppose that $\lambda(p)=0$. The strategy that maximizes $p$'s utility is to first find out all payoffs up to $t^{last}$ and then report a state of the world according to the following three cases:
\begin{enumerate}
    \item if $t^{last}<n$ and $m(\pi_i)<\frac{M+1}{2}$, for all $i\leq t^{last}$, then $p$ maximizes their utility by reporting a state of the world, $\pi_r$, chosen at random among those for which $r>t^{last}$; 
    \item if $t^{last}<n$ and there exists a $\pi_i$, with $i\leq t^{last}$, such that $m(\pi_i)\geq\frac{M+1}{2}$, then  player $p$ maximizes their utility by reporting a state of the world $\pi_r$ such that $m(\pi_r)=\max_{i=1,\ldots,t^{last}}m(\pi_i) = M_{t^{last}}$.
    \item if $t^{last}=n$ then  player $p$ maximizes their utility by reporting a state of the world, $\pi_r$, such that $m(\pi_r)=\max_{i=1,\ldots,n}m(\pi_i)=\overline M$.
\end{enumerate}
}
\end{prop}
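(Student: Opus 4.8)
The plan is to exploit the fact that, because $\lambda(p)=0$, lying is costless and the problem collapses to a pure optimal stopping problem whose only tension is between the expected material gain from revealing one more payoff and its cost. First I would note that at any time $t$ the cumulative cost $\tau_1+\cdots+\tau_t$ is already sunk, so the decision reduces to comparing ``stop now'' with ``reveal one more.'' If $p$ stops at time $t$, the best achievable expected material payoff is $\max\{m(\pi_1),\ldots,m(\pi_t),\tfrac{M+1}{2}\}$: $p$ either reports the largest discovered payoff, or, if all discovered payoffs are poor, reports any not-yet-revealed position, whose expected value is $\tfrac{M+1}{2}$. Up to the floor convention this is exactly $M_t$, so $M_t$ is the natural state variable and the objective is to choose a stopping time $t$ maximizing $\mathbb{E}\big[\max\{M_t,\tfrac{M+1}{2}\}\big]-\sum_{s\le t}\tau_s$.

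Next I would compute the marginal value of one additional reveal. Conditional on the current best value $M_t$, the next payoff is uniform on $\{1,\ldots,M\}$, so revealing it improves the best reportable value by $(X-M_t)^+$ in expectation, i.e.\ by $\Delta(M_t)=\tfrac{1}{M}\sum_{k=M_t+1}^M(k-M_t)=\tfrac{(M-M_t)(M-M_t+1)}{2M}$. Since $M_t$ is weakly increasing in $t$ and $\Delta$ is weakly decreasing in its argument, the marginal benefit $\Delta(M_t)$ is weakly decreasing in $t$, while by assumption the marginal cost $\tau_{t+1}$ is weakly increasing. This is precisely the monotone structure used to define $t^{last}$.

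The heart of the argument is to upgrade this myopic comparison into a statement about the full sequential problem, i.e.\ to show that the one-step-look-ahead rule is optimal. Here I would verify that the stopping region $\{t:\Delta(M_t)\le\tau_{t+1}\}$ is absorbing: after any reveal, $M_{t+1}\ge M_t$ pathwise, hence $\Delta(M_{t+1})\le\Delta(M_t)\le\tau_{t+1}\le\tau_{t+2}$, so once it is not worth revealing one more payoff it is never worth revealing any further. By the standard optimality of the one-step-look-ahead rule for monotone stopping problems, the optimal policy reveals exactly up to the first time the marginal benefit drops to or below the marginal cost, which is $t^{last}$ (with $t^{last}=n$ when this never happens). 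The three cases then follow by reading off the best report at $t^{last}$: if $t^{last}<n$ and every discovered payoff is below $\tfrac{M+1}{2}$ (Case 1), an unrevealed position worth $\tfrac{M+1}{2}$ in expectation strictly beats all discovered payoffs, so $p$ reports a random $\pi_r$ with $r>t^{last}$; if $t^{last}<n$ but some discovered payoff reaches $\tfrac{M+1}{2}$ (Case 2), the discovered maximum $M_{t^{last}}$ dominates the unrevealed fallback and is reported; and if $t^{last}=n$ (Case 3), no unrevealed positions remain, so $p$ reports the global maximum $\overline M$.

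The step I expect to be the main obstacle is the rigorous justification that the one-step rule is \emph{globally} optimal rather than merely locally optimal, together with one genuine edge effect at the final reveal. The edge effect is that revealing the last payoff destroys the unrevealed fallback worth $\tfrac{M+1}{2}$: when every discovered payoff stays below average, the true marginal value of the $n$-th reveal is strictly smaller than $\Delta(M_{n-1})$, so one must check that the $t^{last}$ rule does not over-reveal in this degenerate case (it does not arise in the experimental lists, where an above-average payoff appears early and forces $M_t\ge\tfrac{M+1}{2}$, making $\Delta$ exact at every step). I would dispatch global optimality via the absorbing-region/monotone-stopping argument above and treat Case 3 separately, and I would reconcile the marginal-benefit expression $\Delta(M_t)$ with the threshold used to define $t^{last}$; the remaining computations (monotonicity of $\Delta$ and the comparison of Cases 1 and 2) are routine.
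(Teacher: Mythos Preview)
Your approach is essentially the same as the paper's: both reduce to a myopic cost--benefit comparison, rely on the monotonicity of the marginal benefit $\Delta(M_t)$ versus the marginal cost $\tau_{t+1}$ to pin down $t^{last}$, and then case-split on the best report at that time. Your version is in fact more rigorous---the paper simply asserts that ``it is clear that $p$ finds out all the payoffs until time $t^{last}$'' and then verifies the three cases directly, without the absorbing-region/monotone-stopping justification or the edge-effect discussion you supply.
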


\begin{proof}

It is clear that $p$ finds out all the payoffs until time $t^{last}$ and reports a state of the world at time $t^{last}$. Which state of the world, $\pi_r$, will $p$ report will depend on the payoffs found out up to time $t^{last}$, and on $t^{last}$ itself. Specifically, we distinguish three cases:
\begin{enumerate}
    \item if $t^{last}<n$ and $m(\pi_i)<(M+1)/2$, for all $i\leq t^{last}$, then choosing to report a state of the world $\pi_r$ at random among those whose associated payoff has not been found out yet ($r>t^{last}$), will give $p$ an expected utility of $(M+1)/2$, which is strictly larger that any $m(\pi_i)<(M+1)/2$, with $i\leq t^{last}$. Thus, the choice that maximizes $p$'s utility is to choose to report at random a state of the world $\pi_r$, with $r>t^{last}$.
    \item If $t^{last}<n$ and there exists a $\pi_i$, with $i\leq t^{last}$, such that $m(\pi_i)\geq(M+1)/2$, then choosing to report a state of the world $\pi_r$ at random among those whose associated payoff has not been found out yet ($r>t^{last}$), will give $p$ an expected utility of $(M+1)/2$, which is smaller or equal than the utility that $p$ would get by reporting $\pi_r$ such that $m(\pi_r)=\max_{i=1,\ldots,t^{last}}m(\pi_i) = M_{t^{last}}$.
    \item If $t^{last}=n$, then $p$ has found out all payoffs and then $p$ maximizes their utility by reporting a state of the world corresponding to the maximum payoff $\overline{M}$.
\end{enumerate}

\end{proof}
 
 \begin{prop}\label{prop:time_zero}
{\rm It is optimal for participant $p$ to report the true state of the world $\pi_g$ at time $t_0=0$ if one of the following conditions is satisfied:
\begin{itemize}
    \item $\lambda\geq\frac{M+1}{2}$ and $\tau_1>0$;
    \item $g=1$, $\lambda<\frac{M+1}{2}$ and $\tau_1>\frac{1}{M}\sum_{k=\lfloor\frac{M+1}{2}\rfloor-\lambda+1}^M \left(k-\frac{M+1}{2}+\lambda\right)$;
    \item $g>1$, $\lambda<\frac{M+1}{2}$ and $\tau_{1}>\frac{1}{M}\sum_{k=\lfloor\frac{M+1}{2}\rfloor+\lambda+1}^M\left(k-\lambda-\frac{M+1}{2}\right)$.
\end{itemize}
}
\end{prop}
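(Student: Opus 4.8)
The plan is to use the rationality assumption to collapse the whole sequential problem at $t_0=0$ into a single one-step comparison, and then to evaluate both sides of that comparison as explicit expectations over the first payoff. The rationality assumption states that $p$ chooses to reveal $m(\pi_1)$ only when the expected utility of revealing $m(\pi_1)$ and reporting at $t=1$ exceeds the utility of reporting immediately; taking the contrapositive, reporting at $t_0=0$ is optimal precisely when $U_p(\pi_g,\pi_g,0)$ is at least the expected value of the alternative ``pay $\tau_1$, learn $m(\pi_1)$, report at $t=1$''. (Because this assumption encodes a myopic decision rule, together with $\tau_{t+1}\ge\tau_t$, I need not separately rule out deeper exploration.) First I would record that among the reports available at $t_0=0$ the truthful one is weakly best, since $U_p(\pi_g,\pi_g,0)=\frac{M+1}{2}$ while any lie yields $\frac{M+1}{2}-\lambda\le\frac{M+1}{2}$. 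Hence it suffices to compare the number $\frac{M+1}{2}$ against the expected continuation value.

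Next I would compute that continuation value, treating $a:=m(\pi_1)$ as uniform on $\{1,\dots,M\}$ (so $\E a=\frac{M+1}{2}$) and splitting into the two geometric cases. If $g>1$, then $\pi_1\ne\pi_g$, so learning $a$ reveals the payoff of a potential lie: at $t=1$ the best action is to report position $1$ (a lie, net value $a-\lambda$) when $a-\lambda>\frac{M+1}{2}$ and to report the still-unknown truth (value $\frac{M+1}{2}$) otherwise, giving continuation value $\E[\max(a-\lambda,\tfrac{M+1}{2})]-\tau_1$. If $g=1$, learning $a$ instead reveals the truth's own payoff, so at $t=1$ one reports the truth (value $a$) when $a\ge\frac{M+1}{2}-\lambda$ and lies (value $\frac{M+1}{2}-\lambda$) otherwise, giving continuation value $\E[\max(a,\tfrac{M+1}{2}-\lambda)]-\tau_1$. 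In each case I would apply $\E[\max(X,c)]=c+\E[(X-c)^+]$ together with uniformity to turn the expectation into a finite truncated sum $\frac{1}{M}\sum_{k>c}(k-c)$, the floor $\lfloor\frac{M+1}{2}\rfloor$ entering through the integrality of $a$ and $\lambda$.

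Imposing $\frac{M+1}{2}\ge(\text{continuation value})$ and solving for $\tau_1$ then yields the thresholds. The $g>1$ branch gives exactly $\tau_1>\frac{1}{M}\sum_{k=\lfloor\frac{M+1}{2}\rfloor+\lambda+1}^M(k-\lambda-\frac{M+1}{2})$, and the $g=1$ branch a bound involving $\frac{1}{M}\sum_{k=\lfloor\frac{M+1}{2}\rfloor-\lambda+1}^M(k-\frac{M+1}{2}+\lambda)$. The regime $\lambda\ge\frac{M+1}{2}$ I would dispatch first and uniformly: here the time-$1$ optimum is to tell the truth for every possible revealed value (for $g>1$ because $a-\lambda\le\frac{M-1}{2}<\frac{M+1}{2}$, for $g=1$ because $\frac{M+1}{2}-\lambda\le 0<a$), so the continuation value collapses to $\frac{M+1}{2}-\tau_1$; thus $\tau_1>0$ alone forces reporting at $t_0=0$, which is condition (i).

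I expect the $g=1$ branch to be the main obstacle. Unlike $g>1$, where the revealed payoff is merely a candidate lie and the truth retains its prior value $\frac{M+1}{2}$, here the revealed payoff is informative about the truth itself, so the reservation value shifts to $\frac{M+1}{2}-\lambda$ and the true option value of learning is the downside protection $\E[(\tfrac{M+1}{2}-\lambda-a)^+]$. The displayed sum is the complementary tail $\E[(a-(\tfrac{M+1}{2}-\lambda))^+]$, which exceeds the option value by exactly $\E a-(\tfrac{M+1}{2}-\lambda)=\lambda\ge0$; since the displayed threshold is therefore the larger quantity, the careful step is to verify that condition (ii) still implies $\frac{M+1}{2}\ge(\text{continuation value})$, i.e. that it is genuinely sufficient. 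I would also track the floor bookkeeping when $M$ is even and the degenerate index ranges that render a sum empty, which is what makes the three displayed conditions line up at their boundaries.
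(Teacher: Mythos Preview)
Your approach is essentially the paper's: reduce to the one-step comparison at $t_0=0$, observe that truth dominates among time-$0$ reports, compute the continuation value separately for $g=1$ and $g>1$, and solve for $\tau_1$. Your use of $\E[\max(X,c)]=c+\E[(X-c)^+]$ is a cleaner packaging of the same computation.

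Your caution about the $g=1$ branch is well placed and in fact sharper than the paper's own argument. The paper's ``cost-benefit analysis'' there compares $m(\pi_1)$ against the lying baseline $\tfrac{M+1}{2}-\lambda$ rather than against the time-$0$ truth value $\tfrac{M+1}{2}$, and then asserts an ``iff'' for the displayed threshold. As you correctly compute, the exact threshold for preferring to report at $t_0$ is $\E\bigl[(\tfrac{M+1}{2}-\lambda-a)^+\bigr]$, which is smaller than the displayed sum $\E\bigl[(a-(\tfrac{M+1}{2}-\lambda))^+\bigr]$ by exactly $\lambda$; so condition (ii) is sufficient but not tight, and your plan to verify sufficiency (rather than equivalence) is the right move. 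The proposition only claims sufficiency, so this is harmless for the statement, but your treatment is the more accurate one.
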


 \begin{proof}
Suppose that $p$ decides to report a state of the world at time $t_0=0$. Then $p$ obtains their maximal utility by reporting the true position $\pi_g$, which gives an expected utility of $\frac{M+1}{2}$, which is greater than the expected utility of lying, which, at time $t_0=0$, is $\frac{M+1}{2}-\lambda$.

Now, the expected utility of finding out $m(\pi_1$) and deciding at time $t_1=1$ will clearly depend on whether $g=1$ or not (which is a piece of information known to $p$). So, we distinguish the following two sub-cases.

\textbf{Case $g=1$}

If $g=1$, the utility corresponding to the optimal strategy at time $t_1$ is equal to $\max(\frac{M+1}{2}-\lambda-\tau_1,m(\pi_1)-\tau_1)$. We first consider the case in which this $max$ is equal to $\frac{M+1}{2}-\lambda-\tau_1$. This happens whenever $m(\pi_1)\leq\frac{M+1}{2}-\lambda$, which happens with probability $\frac{1}{M}\max\left(\lfloor\frac{M+1}{2}-\lambda\rfloor,0\right)$, where $\lfloor x\rfloor$ is the floor of $x$, that is, the largest integer that is smaller than or equal to $x$. Thus the expected payoff at time $t_0=0$ of finding out $m(\pi_1)$ at time $t_1=1$ is:
$$
\frac{1}{M}\max\left(\lfloor\frac{M+1}{2}\rfloor-\lambda,0\right)\left(\frac{M+1}{2}-\lambda-\tau_1\right)+ \frac{1}{M}\sum_{k=\max\left(\lfloor\frac{M+1}{2}\rfloor-\lambda,0\right)+1}^M(k-\tau_1)
$$

Now:

\begin{itemize}
    \item If $\frac{M+1}{2}-\lambda\leq0$, then this utility is equal to $\frac{1}{M}\sum_{k=1}^M(k-\tau_1) = \frac{M+1}{2}-\tau_1$, which is clearly strictly smaller than the best payoff of deciding at time $t_0=0$, which is $\frac{M+1}{2}$, because $\tau_1>0$. Thus, in this case, the best strategy of player $p$ is not to find out any payoff and report the truth at time $t_0=0$.
\end{itemize}

\begin{itemize}
    \item If  $\frac{M+1}{2}-\lambda>0$, instead of using the formula above, we do a cost-to-benefit analysis, as follows. At time $t_0$, choosing to pay a cost $\tau_1$ to find out $m(\pi_1)$ costs, indeed, $\tau_1$. As for the benefit, finding out $m(\pi_1)$ and deciding at time $t_1=1$ is profitable only when $m(\pi_1)>\frac{M+1}{2}-\lambda$, in which case the payoff is $m(\pi_1)$ instead of $\frac{M+1}{2}-\lambda$. Thus, the benefit is
$$
\frac{1}{M}\sum_{k=\lfloor\frac{M+1}{2}\rfloor-\lambda+1}^M (k-\frac{M+1}{2}+\lambda).
$$
Thus, according to this cost-benefit analysis and using the assumption that $p$ is risk neutral, we obtain that $p$ decides to find out $m(\pi_1)$ iff:

$$
\tau_1\leq \frac{1}{M}\sum_{k=\lfloor\frac{M+1}{2}\rfloor-\lambda+1}^M \left(k-\frac{M+1}{2}+\lambda\right).
$$

\end{itemize}

\textbf{Case $g>1$}

Finding out $m(\pi_1)$ and deciding at time $t_1=1$ costs $\tau_1$ and it is a profitable deviation from choosing to report $\pi_g$ at time $t_0=0$ iff $m(\pi_1)-\lambda>\frac{M+1}{2}$, in which case $p$ gets a utility of $m(\pi_1)-\lambda$ instead of $\frac{M+1}{2}$. Thus, the cost-benefit analysis corresponds to the condition:
$$
\tau_{1}\leq\frac{1}{M}\sum_{k=\lfloor\frac{M+1}{2}\rfloor+\lambda+1}^M\left(k-\lambda-\frac{M+1}{2}\right)
$$
 
 Note that if $\lambda\geq\frac{M+1}{2}$ the previous inequality is never satisfied, because the sum in the right hand side is empty. This extreme case is thus identical to the one for $g=1$.
 
\end{proof}

  \begin{prop}
  {\rm Suppose that $p$ has found out all payoffs up to $m(\pi_t)$, then:
  \begin{itemize}
      \item $p$ maximizes his expected utility by reporting, at time $t$, the true state of the world, if one of the following conditions is satisfied:
      \begin{itemize}
          \item $g\leq t$, $m(\pi_g)> M_t-\lambda$, and $
     \tau_{t+1}>\frac{1}{M}\sum_{k=m(\pi_g)+\lambda+1}^M(k-\lambda-m(\pi_g))
     $; or 
     \item $g=t+1$, $M_t-\lambda<\frac{M+1}{2}$, and $
         \tau_{t+1}>\frac{1}{M}\sum_{k=1}^{M_t-\lambda}\left(M_t-\lambda-\frac{M+1}{2}\right)+\frac{1}{M}\sum_{k=M_t-\lambda+1}^{M}\left(k-\frac{M+1}{2}\right)
         $; or
        \item $g>t+1$, $M_t-\lambda<\frac{M+1}{2}$, and  $
         \tau_{t+1}>\frac{1}{M}\sum_{k=\lfloor\frac{M+1}{2}\rfloor+\lambda+1}^M\left(k-\lambda-\frac{M+1}{2}\right)
         $.
      \end{itemize}

      \item $p$ maximizes his expected utility by reporting, at time $t$, a state of the world corresponding to a payoff of $M_t$ (if there is more than one, then $p$ chooses at random among them), if one of the following conditions is satisfied:
      \begin{itemize}
          \item $g\leq t$, $m(\pi_g)< M_t-\lambda$, and $
     \tau_{t+1}>\frac{1}{M}\sum_{k=m(\pi_g)+\lambda+1}^M(k-\lambda-m(\pi_g))
     $; or 
     \item $g=t+1$, $M_t-\lambda>\frac{M+1}{2}$, and $
         \tau_{t+1}>\frac{1}{M}\sum_{k=1}^{M_t-\lambda}\left(M_t-\lambda-\frac{M+1}{2}\right)+\frac{1}{M}\sum_{k=M_t-\lambda+1}^{M}\left(k-\frac{M+1}{2}\right)
         $; or
        \item $g>t+1$, $M_t-\lambda>\frac{M+1}{2}$, and  $
         \tau_{t+1}>\frac{1}{M}\sum_{k=\lfloor\frac{M+1}{2}\rfloor+\lambda+1}^M\left(k-\lambda-\frac{M+1}{2}\right)
         $.
      \end{itemize}
      \item In all other situations, $p$ maximizes his expected utility by paying, at time $t$, a cost $\tau_{t+1}$ to find out $m(\pi_{t+1})$ at time $t+1$.
  \end{itemize}
  }
  \end{prop}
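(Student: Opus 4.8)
The plan is to treat the decision at time $t$ as the one-step optimal-stopping problem licensed by the rationality assumption: having revealed $m(\pi_1),\ldots,m(\pi_t)$, the participant compares the utility of reporting now with the expected utility of paying $\tau_{t+1}$, revealing $m(\pi_{t+1})$, and then reporting at time $t+1$. The first step is to pin down the optimal report at an arbitrary time $s$. Since the cumulative cost $\tau_1+\cdots+\tau_s$ is sunk, it does not affect which state is reported, so the value of the best report is $\max\{T_s,\,M_s-\lambda\}$, where $T_s=m(\pi_g)$ if $g\le s$ and $T_s=\tfrac{M+1}{2}$ if $g>s$ is the (expected) material payoff of honesty, and $M_s-\lambda$ is the value of the best available lie (report a state achieving $M_s$, an unseen state being worth $\lfloor\tfrac{M+1}{2}\rfloor$). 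This collapses the whole dynamic problem into a comparison of two numbers at each stage, and it is the same cost-benefit template already used in the time-$t_0$ analysis.

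Next I would compute the expected continuation value, using risk neutrality to compare expectations linearly, treating $m(\pi_{t+1})$ as uniform on $\{1,\ldots,M\}$, and writing $M_{t+1}=\max\{M_t,m(\pi_{t+1})\}$. The natural split is by the position of $g$ relative to $t$, since this controls whether the next draw changes the honest payoff. When $g\le t$ the honest payoff is fixed at $m(\pi_g)$, and the continuation beats reporting the truth only when the draw $k$ satisfies $k-\lambda>m(\pi_g)$, giving expected gain $\tfrac{1}{M}\sum_{k=m(\pi_g)+\lambda+1}^M(k-\lambda-m(\pi_g))$; comparing this with $\tau_{t+1}$ yields the threshold in both $g\le t$ bullets, and the sign of $m(\pi_g)-(M_t-\lambda)$ decides whether the stopped report is the truth or $M_t$. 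When $g=t+1$ the next draw \emph{is} the honest payoff, so I would separate the draws $k\le M_t-\lambda$ (report stays at $M_t-\lambda$) from $k>M_t-\lambda$ (honesty at value $k$ becomes optimal), producing the two-sum threshold $\tfrac{1}{M}\sum_{k=1}^{M_t-\lambda}\!\bigl(M_t-\lambda-\tfrac{M+1}{2}\bigr)+\tfrac{1}{M}\sum_{k=M_t-\lambda+1}^M\!\bigl(k-\tfrac{M+1}{2}\bigr)$. When $g>t+1$ honesty stays unknown (value $\tfrac{M+1}{2}$) even after the draw, so improvement requires $k>\tfrac{M+1}{2}+\lambda$, giving $\tfrac{1}{M}\sum_{k=\lfloor\frac{M+1}{2}\rfloor+\lambda+1}^M(k-\lambda-\tfrac{M+1}{2})$. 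In every case, stopping is optimal iff $\tau_{t+1}$ exceeds the computed benefit, and among stopped reports truth is chosen iff $T_t>M_t-\lambda$; all remaining $(\lambda,\tau_{t+1})$ fall into the residual ``continue'' clause.

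The main obstacle is the benchmark subtlety hidden in the phrase ``utility of reporting at time $t$.'' The thresholds above are cleanest when written as the gain \emph{over reporting the truth}, which is exactly correct when the truth is the current best report — precisely the three ``report truth'' bullets, where $m(\pi_g)>M_t-\lambda$ or $M_t-\lambda<\tfrac{M+1}{2}$ — so those conditions come out tight. In the ``report $M_t$'' bullets, however, the current best report is already a lie worth $M_t-\lambda>T_t$, so measuring the benefit against the truth overstates it; the identical threshold then functions as a \emph{sufficient} condition for stopping and lying, the genuinely tight one being $\tfrac{1}{M}\sum_{k=M_t+1}^M(k-M_t)$. I would therefore be careful to state explicitly whether the ``report $M_t$'' conditions are intended as exact or merely sufficient, since this affects the exact boundary of the residual ``continue'' region.

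Finally, I would dispatch the routine bookkeeping: the floor $\lfloor\tfrac{M+1}{2}\rfloor$ forces attention to the parity of $M$ and to the degenerate regimes where $\tfrac{M+1}{2}-\lambda\le 0$ or a lower summation limit exceeds $M$, all absorbed by the empty-sum convention noted before the statement. I would also invoke the assumption $\tau_{t+1}\ge\tau_t$ together with the weak monotonicity of the per-step benefit (which decreases as $M_t$ grows, by the same one-line argument used for Result 4 consistency) to guarantee that the ``continue'' decisions form an initial segment of times, so that the induced stopping rule is well defined and the stagewise one-step analysis indeed reconstructs the globally optimal strategy.
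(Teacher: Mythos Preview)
Your approach is essentially the paper's own: the same three-way split on the position of $g$ relative to $t$, the same reduction at each stage to a one-step cost--benefit comparison between the best stopped report $\max\{T_t,M_t-\lambda\}$ and the expected value of revealing one more payoff, and the same computations of the thresholds. The paper carries out exactly the six subcases you outline and arrives at the same formulas.

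Your ``benchmark subtlety'' paragraph is not only correct but is in fact borne out by the paper's own proof. In each of the three ``report $M_t$'' subcases ($g\le t$ with $m(\pi_g)<M_t-\lambda$; $g=t+1$ with $M_t-\lambda>\tfrac{M+1}{2}$; $g>t+1$ with $M_t-\lambda>\tfrac{M+1}{2}$) the paper's derivation produces the \emph{tight} stopping thresholds $\tfrac{1}{M}\sum_{k=M_t+1}^M(k-M_t)$, $\tfrac{1}{M}\sum_{k=M_t-\lambda+1}^M(k-M_t+\lambda)$, and $\tfrac{1}{M}\sum_{k=M_t+1}^M(k-M_t)$ respectively --- exactly the ones you identify --- rather than the larger thresholds copied into the proposition from the ``report truth'' bullets. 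So the proposition as stated gives only sufficient conditions in those three clauses, and the residual ``continue'' region is correspondingly understated; your proposal diagnoses this precisely. The extra bookkeeping you mention (empty-sum and parity conventions, the monotonicity of the per-step benefit in $M_t$ to justify the one-step look-ahead) is not spelled out in the paper's proof but is a reasonable tightening.
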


\begin{proof}
 Suppose that $p$ has found out all payoffs up to $m(\pi_t)$, for some $t<n$.  Let us denote $M_t=\max(m(\pi_1),m(\pi_2),\ldots,m(\pi_t),\frac{M+1}{2})$. 
 
 We distinguish three cases, depending on whether $g\leq t$, $g=t+1$ or $g>t+1$.
 
 \begin{itemize}
     \item If $g\leq t$, then $p$'s maximal utility at time $t$ is $\max(m(\pi_g)-\tau_1-\ldots-\tau_t, M_t-\lambda-\tau_1-\ldots-\tau_t)$. Thus, we distinguish two subcases:
     \begin{itemize}
     \item If $m(\pi_g)\geq M_t-\lambda$, then, at time $t$, $p$'s maximal utility is $m(\pi_g)-\tau_1-\ldots-\tau_t$, while, at time $t+1$, $p$'s maximal utility is $\max(m(\pi_g)-\tau_1-\ldots-\tau_{t+1},m(\pi_{t+1})-\lambda-\tau_1-\ldots-\tau_{t+1})$. Thus, the cost of finding out $m(\pi_{t+1})$ is smaller than the benefit iff
     $$
     \tau_{t+1}\leq\frac{1}{M}\sum_{k=m(\pi_g)+\lambda+1}^M(k-\lambda-m(\pi_g))
     $$
     Under these circumstances, it clearly follows that lucky participants, as well as those with a high cost of lying, are more likely to tell the truth at time $t$, without finding out the payoff corresponding to time $t+1$.
     \item If $m(\pi_g)< M_t-\lambda$, then, at time $t$, $p$'s maximal utility is $M_t-\lambda-\tau_1-\ldots-\tau_t$, while, at time $t+1$, $p$'s maximal utility is $\max(M_t-\lambda-\tau_1-\ldots-\tau_{t+1},m(\pi_{t+1})-\lambda-\tau_1-\ldots-\tau_{t+1})$. Thus, $p$ decide to find out $m(\pi_{t+1})$ iff
     $$
     \tau_{t+1}\leq\frac{1}{M}\sum_{k=M_t+1}^M(k-M_t).
     $$
     From this case, it clearly follows that the larger $M_t$ is, the more likely it will be for $p$ to stop. 
     \end{itemize}
     \item If $g=t+1$, then, at time $t$, $p$'s maximal utility is $\max(M_t-\lambda-\tau_1-\ldots-\tau_t, \frac{M+1}{2}-\tau_1-\ldots-\tau_t)$. Thus, we distinguish two subcases:
     \begin{itemize}
         \item if $M_t-\lambda\leq\frac{M+1}{2}$, then $p$'s maximal utility at time $t$ is $\frac{M+1}{2}-\tau_1-\ldots-\tau_t$, while, at time $t+1$, $p$'s maximal utility  is equal to $\max(m(\pi_{t+1})-\tau_1-\ldots-\tau_{t+1},M_t-\lambda-\tau_1-\ldots-\tau_{t+1})$. Therefore, finding out $m(\pi_{t+1})$ is a profitable deviation from choosing to report the truth at time $t$ iff
         $$
         \tau_{t+1}\leq\frac{1}{M}\sum_{k=1}^{M_t-\lambda}\left(M_t-\lambda-\frac{M+1}{2}\right)+\frac{1}{M}\sum_{k=M_t-\lambda+1}^{M}\left(k-\frac{M+1}{2}\right).
         $$
         \item If $M_t-\lambda>\frac{M+1}{2}$, then $p$'s maximal utility at time $t$ is $M_t-\lambda-\tau_1-\ldots-\tau_t$, while, at time $t+1$, $p$'s maximal utility is $\max(m(\pi_g)-\tau_1-\ldots-\tau_{t+1},M_t-\lambda-\tau_1-\ldots-\tau_{t+1})$. Therefore, $p$ decides to find out $m(\pi_g)$ iff
         $$
         \tau_{t+1}\leq\frac{1}{M}\sum_{k=M_t-\lambda+1}^M(k-M_t+\lambda).
         $$
     \end{itemize}
     \item If $g>t+1$, then, at time $t$, $p$'s maximal utility is $\max(M_t-\lambda-\tau_1-\ldots-\tau_t,\frac{M+1}{2}-\tau_1-\ldots-\tau_t)$. So, we have two subcases:
     \begin{itemize}
         \item If $M_t-\lambda\leq\frac{M+1}{2}$, then $p$'s maximal utility at time $t$ is $\frac{M+1}{2}-\tau_1-\ldots-\tau_t$, while, at time $t+1$, $p$'s maximal expected utility is $\max(m(\pi_{t+1})-\lambda-\tau_1-\ldots-\tau_{t+1}, \frac{M+1}{2}-\tau_1-\ldots-\tau_{t+1})$. So, $p$ decides to find out $m(\pi_{t+1})$ iff
         $$
         \tau_{t+1}\leq\frac{1}{M}\sum_{k=\lfloor\frac{M+1}{2}\rfloor+\lambda+1}^M\left(k-\lambda-\frac{M+1}{2}\right).
         $$
         \item If $M_t-\lambda>\frac{M+1}{2}$, then $p$'s maximal utility at time $t$ is $M_t-\lambda-\tau_1-\ldots-\tau_t$, while, at time $t+1$, $p$'s maximal utility is $\max(m(\pi_{t+1})-\lambda-\tau_1-\ldots-\tau_{t+1}, M_t-\lambda-\tau_1-\ldots-\tau_{t+1})$. So, $p$ decides to find out $m(\pi_{t+1})$ iff
         $$
         \tau_{t+1}\leq\frac{1}{M}\sum_{k=M_t+1}^M(k-M_t).
         $$
     \end{itemize}
 \end{itemize}
 
 \end{proof}

\end{document}